\title{Contract-based Design and Verification of Multi-Agent Systems\\
with Quantitative Temporal Requirements}
\author {
    Rafael Dewes, %
    Rayna Dimitrova%
}
\DeclareMathOperator{\LTLu}{\mathcal{U}}
\providecommand{\LTLuntil}{\LTLu}
\providecommand{\LTLglobally}{\relax}
\providecommand{\LTLfinally}{\relax}
\providecommand{\LTLnext}{\relax}
\let\LTLglobally\LTLsquare
\let\LTLfinally\LTLdiamond
\let\LTLnext\LTLcircle
\newcommand{\power}[1]{\ensuremath{2^{#1}}}
\newcommand{\proj}[2]{\ensuremath{\mathsf{proj}(#1,#2)}}
\newcommand{\proje}[2]{\ensuremath{\mathsf{proj}_\exists(#1,#2)}}
\newcommand{\sema}[1]{\ensuremath{\llbracket}#1\ensuremath{\rrbracket}}
\newcommand{\nats}{\ensuremath{\mathbb N}}
\newcommand{\reals}{\ensuremath{\mathbb R}}
\newcommand{\mc}{\mathcal}
\newcommand{\gedc}{GEDC} %
\newcommand{\Distri}{\ensuremath{\mathit{Dec}}}
\newcommand{\distr}{\ensuremath{\mathbf{d}}}
\newcommand{\Hopeful}[1]{\ensuremath{\mathsf{Hopeful}(#1)}}
\newcommand{\hopef}{\ensuremath{H}}
\newcommand{\decset}{\ensuremath{D}}
\newcommand{\collab}[1]{\ensuremath{\mathsf{Collab}(#1)}}
\newcommand{\alldec}{\ensuremath{\mathit{ADec}}}
\newcommand{\agv}{\ensuremath{\mathit{AG}}}
\newcommand{\ltlf}{LTL\ensuremath{[\mathcal F]}}
\newcommand{\AP}{\ensuremath{\mathit{AP}}}
\newcommand{\true}{\ensuremath{\mathsf{true}}}
\newcommand{\false}{\ensuremath{\mathsf{false}}}
\newcommand{\Vals}[1]{\ensuremath{\mathit{Vals}(#1)}}
\newcommand{\inp}{\ensuremath{\mathit{I}}}
\newcommand{\outp}{\ensuremath{\mathit{O}}}
\newcommand{\init}{\ensuremath{\mathit{init}}}
\newcommand{\Outl}{\ensuremath{\mathit{Out}}}
\newcommand{\outnota}{\ensuremath{\outp_{\overline a}}}
\newcommand{\outa}{\ensuremath{\outp_{a}}}
\newcommand{\nota}{\ensuremath{\bar{a}}}
\newcommand{\Agents}{\ensuremath{\mathsf{Agt}}}
\newcommand{\comb}{\ensuremath{\mathit{comb}}}
\newcommand{\local}{\ensuremath{\mathit{local}}}
\newcommand{\shared}{\ensuremath{\mathit{shared}}}
\newtheorem{definition}{Definition}
\newtheorem{theorem}{Theorem}
\newtheorem{proposition}{Proposition}
\newtheorem{example}{Example}[section]
\newcommand{\atsite}{\ensuremath{\mathsf{at\_site}}}
\newcommand{\atbase}{\ensuremath{\mathsf{at\_base}}}
\newcommand{\call}{\ensuremath{\mathsf{call}}}
\newcommand{\badw}{\ensuremath{\mathsf{badw}}}
\begin{document}

\maketitle

\begin{abstract}
    Quantitative requirements play an important role in the context of multi-agent systems, where there is often a trade-off between the tasks of individual agents and the constraints that the agents must jointly adhere to. 
    We study multi-agent systems whose requirements are formally specified in the quantitative temporal logic LTL[$\mathcal{F}$] as a combination of local task specifications for the individual agents and a shared safety constraint. 
    The intricate dependencies between the individual agents entailed by their local and shared objectives make the design of multi-agent systems error-prone, and their verification time-consuming.
    In this paper we address this problem by proposing a novel notion of quantitative assume-guarantee contracts, that enables the compositional design and verification of multi-agent systems with quantitative temporal specifications.
    The crux of these contracts lies in their ability to capture the coordination between the individual agents to achieve an optimal value of the overall specification under any possible behavior of the external environment.
 We show that the proposed  framework improves the scalability and modularity of formal verification of multi-agent systems against quantitative temporal specifications.
\end{abstract}

\section{Introduction}\label{sec:intro}

Multi-agent systems (MAS) have become increasingly important in various domains, such as robotics, distributed computing, and autonomous vehicles. 
In MAS~\cite{Wooldridge09}, agents operate autonomously but may need to interact with each other to achieve shared objectives. 
Compared to larger monolithic systems, the ability of multiple agents to work collaboratively towards a common goal offers significant advantages in terms of efficiency and flexibility.

Coordination in MAS is a major challenge, as it involves not only the synchronization of actions among agents but also handles interdependencies and emergent conflicts. 
While it is crucial for aligning strategies, it can also introduce overhead and complexity, especially in dynamic and unpredictable environments. 
Additionally, achieving shared goals often requires balancing individual agent preferences and global objectives, which can introduce conflict. 

To address these challenges, \emph{contract-based design}~\cite{BenvenisteCNPRR18} offers a structured way to define and manage interactions between agents. 
Compositional design approaches employ different forms of contracts, such as assume-guarantee contracts~\cite{BenvenisteCFMPS07}. %
However, traditional contract frameworks  fall short in handling complex quantitative requirements and ensuring optimal collective performance, 
where agents must adapt to varying conditions and still aim for the best possible outcomes.

In this paper, we tackle the problem of designing an expressive contract framework for the compositional and best-effort satisfaction of quantitative specifications in MAS with collective goals. 
Our proposed framework extends traditional contract-based methods by incorporating quantitative measures and best-effort guarantees. 
Agents must strive for best possible outcomes even when perfect satisfaction %
is not feasible, 
ensuring that the system remains functional and effective in suboptimal environments. 
The approach includes formal verification methods to establish that the decomposition of global specifications into local contracts maintains the integrity of overall system goals, 
ensuring that individual agent behaviors align with the  collective objectives.

We consider compositional specifications in the quantitative temporal logic \ltlf~\cite{AlmagorBK16}, 
which extends linear temporal logic (LTL) with quality operators.
For a given trace, an \ltlf\ formula evaluates to a value from a finite set of possible satisfaction values.
We focus on specifications combining  requirements \emph{local} to individual agents and \emph{shared} requirements constraining the agents' interaction.
Local requirements only include actions of the respective agent, 
while shared objectives can contain any actions but must be safety properties.
Our framework ensures that individual agent behavior aligns with both local and shared system-wide objectives.
Existing work mostly studies the qualitative case where agents have a common objective.
In our model agents are cooperative, unlike non-cooperative MAS where concepts like Nash equilibria are central~\cite{KupfermanPV16}.
Closest to our approach is \cite{DewesD23}, however our framework is more expressive and,  unlike \cite{DewesD23}, our decomposition contracts are complete.
We will highlight these differences in \Cref{sec:contracts}.

Best-effort satisfaction means the highest satisfaction value enabled by the environment must be met by the system.
In contrast to previous work, our contracts take this into account to allow for different levels of coordination based on the inputs.
It relates to the qualitative notion of dominant strategies~\cite{DammF14} %
that perform as good as the best alternative.
We build on the idea of ``good-enough'' realizability~\cite{AlmagorK20} 
to encompass quantitative compositional specifications.%

The following example illustrates how the best-effort interpretation of compositional quantitative specifications leads to conditional obligations in a MAS.

\begin{example} \label{ex:intro} %
    In this example, we consider a MAS with three agents modelling vehicles: two of them (agents $a_1$ and $a_2$) can transport goods to a construction site,  and the third (agent $a_3$) provides a visual feed.
    The desired behavior of the overall system is specified as an \ltlf\ formula\\
    $\phantom{aaaaa}
        \Phi = \frac{1}{6} \varphi^1_\local \oplus \frac{1}{6} \varphi^2_\local \oplus \frac{1}{6} \varphi^3_\local \oplus \frac{1}{2} \Psi_\shared,
    $\\
   which is  the weighted sum ($\oplus$) of the sub-formulas $\varphi^i_\local$ and $\psi_\shared$.
    Each agent $a_i$ has a local specification $\varphi^i_\local$ that prescribes their individual objectives.
    The shared requirement $\Psi_\shared$ encodes a task assigned collectively to all agents. 
    Here, the local requirements are of equal priority, while the shared requirement $\Psi_\shared$ is weighed higher.
    
    The agents operate in an environment that can issue calls to move to the site and send weather warnings, modeled with propositions \call\ and \badw\, respectively.
    $\Psi_\shared$ requires that after each call, $a_3$ and at least one of $a_1$ and $a_2$ should respond in one or two time steps.
    According to $\varphi^i_\local$, vehicles $a_1$ and $a_2$ cannot travel from the base to the site in an instant, 
    and it is undesirable to travel during bad weather.
    We also require that each of $a_1$ and $a_2$ spends two consecutive time steps at the base infinitely often.
    The vehicle modelled by $a_3$ travels faster, and is unaffected by the weather, but it must be at the base infinitely often (to archive data). 
    
    Best-effort satisfaction of $\Phi$ means that it must be satisfied as well as possible under the current environment, by the system as a whole.
    Whenever \call\ and \badw\ occur together, to satisfy $\Psi_\shared$ either $a_1$ and $a_2$ must travel in bad weather, thus lowering the overall value of $\Phi$.
    If the frequency of this is high enough, then either both $a_1$ and $a_2$ must travel in bad conditions, 
    or one of them will be unable to stay at the base for two steps, violating $\varphi^i_\local$.

    There is an obvious need for coordination between the agents to collectively good-enough satisfy $\Phi$.
    Here, the agents must distribute the tasks to satisfy $\Psi_\shared$ and agree on values with which the local specifications for $a_1$ and $a_2$ are satisfied.
    Existing contract notions are not suitable for expressing this balancing of quantitative obligations.
    We present a framework for contract-based design that makes it possible to capture such coordination in the form of a contract.
    This enables independent design of the individual agents,  leaving flexibility for the implementation of each agent outside of their shared obligations.\looseness=-1
\end{example}

\section{Preliminaries}\label{sec:prelim}

In this section we recall definitions and notation necessary for presenting our framework, 
such as formal languages, the specification logic \ltlf, and the formalization of best-effort, or ``good-enough'', satisfaction.

\subsubsection{Languages}\label{sec:langs}
Let $\Sigma$ be a finite alphabet. 
The set of finite (infinite) words over $\Sigma$ is denoted by $\Sigma^*$ (resp. \ $\Sigma^\omega$).
For $\sigma  = \sigma_0\sigma_1\sigma_2\ldots\in \Sigma^\omega$, 
we denote $\sigma[i] = \sigma_i$,  and with $\sigma[i,\infty) = \sigma_{i},\sigma_{i+1},\ldots$ the suffix of $\sigma$ starting at position $i$.
For $\sigma' \in \Sigma^*$ and $\sigma'' \in \Sigma^* \cup \Sigma^\omega$, we denote with $\sigma' \cdot \sigma''$ their concatenation.
A language $L \subseteq \Sigma^\omega$ is  a \emph{safety language} if and only if for every $\sigma \in \Sigma^\omega \setminus L$ there exists a finite prefix $\sigma'$ of $\sigma$ such that $\sigma'\cdot\sigma'' \not \in L$ for every 
$\sigma'' \in \Sigma^\omega$.

For a set $X$, we denote with $\power{X}$ the powerset of $X$.
For a word $\sigma$ over alphabet $\power{X}$ and a subset $Y \subseteq X$ we denote with $\proj{\sigma}{Y}$ the projection of $\sigma$ onto the alphabet $\power{Y}$. 
For  $n \in\nats$, we define $[n]:=\{1,\ldots,n\}$. 
Given disjoint sets $X_1,\ldots,X_m$ and 
for each $i \in \{1,\ldots,m\}$ a word $\sigma_i$ over the alphabet $\power{X_i}$,  
we define the parallel composition $\parallel_{i=1}^m \sigma_i$ of $\sigma_1,\ldots,\sigma_m$ such that $(\parallel_{i=1}^m \sigma_i)[j] = \bigcup_{i=1}^m \sigma_i[j]$ for all $j$.

\subsubsection{The Specification Language \ltlf}\label{sec:ltlf}
We now recall the temporal logic \ltlf\ introduced in~\cite{AlmagorBK16}.
Let $\AP$ be a set of Boolean atomic propositions,  and 
$\mc F \subseteq \{f : [0,1]^k \to [0,1] \mid k \in \nats\}$ a set of functions.
The  \ltlf\ formulas are generated by the grammar
$\begin{array}{c}
\varphi ::= p \mid \true \mid \false \mid f(\varphi_1,\ldots,\varphi_k) \mid 
\LTLnext \varphi \mid\varphi_1 \LTLuntil \varphi_2,
\end{array}$
where $p \in \AP$ is an atomic proposition,  and $f \in \mc F$.

We consider sets $\mc F$ that include functions representing the Boolean operators,  i.e.,  $\{f_\neg,f_\land,f_\lor\} \subseteq \mc F$, where $f_\neg(x) := 1-x$, $f_\land(x,y) := \min\{x,y\}$ and $f_\vee(x,y) := \max\{x,y\}$.  
For ease of notation, we use the operators $\neg, \land,\lor$ instead of the corresponding functions.
One useful function is
the weighted average $x \oplus_{\lambda} y := \lambda \cdot x + (1-\lambda) \cdot y$, where $\lambda \in \{0,1\}$. 
Abusing notation,  we often use $\oplus$ with more than two arguments.
$\LTLnext$ and $\LTLuntil$ are the LTL \emph{next} and \emph{until} operators respectively~\cite{BaierKatoen08}.
We define the usual derived temporal operators 
\emph{finally}  $\LTLfinally \varphi := \true \LTLuntil \varphi$ and 
\emph{globally} $\LTLglobally \varphi  := \neg (\LTLfinally \neg\varphi)$.
$\LTLglobally$ requires its argument to hold at every step from now on;  
$\LTLfinally$ requires that it holds at some point in the future;
$\LTLnext$ requires it to be true in the next step.\looseness=-1

The semantics of \ltlf\ is defined with respect to words in $(\power{AP})^\omega$,  and maps an \ltlf\ formula $\varphi$ and a word $\sigma \in (\power{AP})^\omega$,  to a value $\sema{\varphi,\sigma} \in [0,1]$. 
For $f \in \mc F$,  we define $\sema{f(\varphi_1,\ldots,\varphi_k),\sigma} := f(\sema{\varphi_1,\sigma},\ldots,\sema{\varphi_k,\sigma})$. 
The semantics of the \emph{until} operator $\LTLuntil$ is  $\sema{\varphi_1 \LTLuntil \varphi_2,\sigma}:= \max_{i\geq 0}\{\min\{\sema{\varphi_2,\sigma[i,\infty)}, \min_{0\leq j< i}\sema{\varphi_1,\sigma[j,\infty)}\}\}.$
We refer the reader to~\cite{AlmagorBK16} for the full  definition of the semantics.
We denote with $\Vals{\varphi} := \{\sema{\varphi,\sigma} \mid \sigma \in (\power{AP})^\omega\}$ the set of possible values of an \ltlf\ formula $\varphi$.
\cite{AlmagorBK16} showed that $|\Vals{\varphi}|\leq \power{|\varphi|}$ for every \ltlf\ formula $\varphi$, where $|\varphi|$ is the formula's description size.  
Thus, each formula's set of possible values is finite.

\begin{example} \label{ex:spec} %

    We formalize \Cref{ex:intro} using \ltlf .
    The location and movement of the vehicles are modelled by propositions $\atbase_i$ and $\atsite_i$. 
    For $\varphi_\local^i$ for $i \in \{1,2\}$:
    \begin{align*}
        \varphi_\local^i := &\LTLglobally (\lnot \atbase_i \lor \lnot \atsite_i) \land \\
        &\LTLglobally (\atbase_i \to \LTLnext \lnot \atsite_i) \land \\
        &\LTLglobally \LTLfinally (\atbase_i \land \LTLnext \atbase_i) \land \\
        &\LTLglobally (\badw \to \LTLnext ( (\atbase_i \lor \atsite_i) \oplus_{\frac{1}{2}} \top )). 
    \end{align*}
    This formula evaluates to one of three values:
    It is $0$ if any of the first three conjuncts are violated. 
    Otherwise, it is $1$ if agent $a_i$ never travels during bad weather, 
    and $\frac{1}{2}$ if it does (even just once).
    Further, $\varphi^3_\local := \LTLglobally \LTLfinally \lnot \atsite_3$.

    $
    \Psi_\shared := \LTLglobally \big(\call \to \bigvee_{i=1}^2(
    \mathit{respond}_{i,3} \vee  \LTLnext\mathit{respond}_{i,3})\big),
    $
    where $\mathit{respond}_{i,3}:= \lnot \atsite_i \land \LTLnext \atsite_i \land \LTLnext\atsite_3$.
    Thus, vehicles must be at the site within two units of time.
    Vehicles $a_1$ and $a_2$ must arrive and not already be there, to model transporting goods.

\end{example}

\section{Quantitative Temporal  Specifications of Multi-Agent Systems}\label{sec:ge-def}

We now present a formal model of MAS and the notion of \emph{compositional \ltlf\ specifications} of such systems introduced in~\cite{DewesD23}.

\subsubsection{Formal Model for Multi-Agent Systems}
We consider MAS where agents interact both with each other and the external environment via Boolean variables,  i.e.,  atomic propositions.  
We fix a set $\AP$ of atomic propositions such that $\AP = \inp \uplus \outp$,  for a finite set $\inp$ of \emph{input atomic propositions} and 
a finite set $\outp$ of \emph{output atomic propositions} that are disjoint, i.e.,  $\inp \cap \outp = \emptyset$.
The input propositions $\inp $ are updated by the external environment and the output propositions $\outp$ are updated by the agents in the system.
We call words in $(\power{\inp  \cup \outp})^\omega$ \emph{(execution) traces} and those in $(\power{\inp})^\omega$ \emph{input traces}.

An \emph{agent} in the system is modelled as a Moore machine 
$M = (\inp_{M},  \outp_{M}, S,s^\init,\rho,\Outl)$, where $\inp_{M}$ and  $\outp_{M}$ are $M$'s sets of input and output propositions,  $S$ is a finite set of states, with initial state $s^\init \in S$, transition function $\rho : S \times \power{\inp_{M}} \to S$, and output labeling function $\Outl : S \to \power{\outp_{M}}$.
For input trace $\sigma_{\inp_M} \in (\power{\inp_M})^\omega$,  
$M$ produces the output trace 
$M(\sigma_{\inp_M}) \in(\power{\outp_M})^\omega$ such that 
$M(\sigma_{\inp_M})[i] = \Outl(s_i)$, 
where the sequence of states $s_0s_1\ldots \in S^\omega$ is such that 
$s_0 = s^\init$, and $s_{i+1} = \rho(s_i,\sigma_{\inp_M}[i])$ for all $i$.

A \emph{multi-agent system} (MAS) is a tuple $\mc S = (\inp,  \outp,\mc M)$, where $\inp$ and  $\outp$ are the sets of input and output propositions respectively, 
$\mc M = \langle M_1,\ldots M_n\rangle$ is a tuple of Moore machines representing the agents, where
$M_a = (\inp_{a},  \outp_{a}, S_a,s_a^{init},\rho_a,\Outl_a)$ is such that 
(1) for every $a,a'\in [n]$ with $a\neq a'$ it holds that $\outp_a \cap \outp_{a'} =\emptyset$,  (2) $\biguplus_{a=1}^n \outp_a = \outp$,  and (3) $\inp_a = \inp \cup (\outp \setminus \outp_a)$.
Conditions (1) and (2) state that the sets of output propositions of the individual agents partition $\outp$.  Condition (3) stipulates that each agent observes all input propositions $\inp$ and all output propositions of the other agents.  
We denote with $\Agents := [n]$ the set of all agents, and  with $\outnota  = \bigcup_{a' \in \Agents \setminus\{a\}} \outp_{a'}$ the set of outputs of agents different from $a$.
Our model uses a composition of Moore machines, similar to Moore synchronous game structures in~\cite{AlurHK02}, however here each machine models the implementation of an agent.
Given an input trace $\sigma_\inp\in (\power{\inp})^\omega$,  a MAS generates an output trace $\sigma_\outp\in (\power{\outp})^\omega$,  
which we denote by $(\parallel_{c=1}^n M_a)(\sigma_\inp)$, 
such that  $\proj{\sigma_\outp}{\outp_a} = M_a(\sigma_\inp \parallel \proj{\sigma_\outp}{\outnota})$ for all $a$.
That is,  $(\parallel_{a=1}^n M_a)(\sigma_\inp)$ is the \emph{composition of all the output traces} of the agents in $\mc S$.
We define the set of traces $\mc L (M_a) \subseteq (\power{\inp \cup \outp})^\omega$  as $\mc L (M_a) : = \{\sigma \in \power{\inp \cup \outp})^\omega \mid M_a(\proj{\sigma}{\inp_a}) = \proj{\sigma}{\outp_a}\}$.

\subsubsection{Good-Enough Satisfaction of \ltlf\ Specifications}
Following~\cite{AlmagorK20}, we consider \emph{good-enough satisfaction} of an \ltlf\ specification $\Phi$ by a system $\mc S$. Intuitively,  we require that for every input trace, the system's output results in the best value for $\Phi$ possible for this input trace.
To formalize this,  we use the notion of \emph{$v$-hopeful input sequences}~\cite{AlmagorK20}.
Those are the input sequences for which there exists output sequences such that $\Phi$ has value $v$ in the resulting executions.
Formally,  
for $v \in [0,1]$, 
the set of \emph{$v$-hopeful input sequences for the formula $\Phi$} is  
$\Hopeful{v, \Phi} := \{ \sigma_\inp \in (\power{\inp})^\omega \mid \exists \sigma_\outp \in (\power{\outp})^\omega. ~\sema{\Phi, (\sigma_\inp \parallel \sigma_\outp)} = v \}$.

\begin{definition}[Good-Enough Satisfaction]\label{def:spec-sat}
Given an \ltlf\ formula $\Phi$ over $\AP = \inp \uplus \outp$,  
and a MAS $\mc S = (\inp,  \outp,\mc M)$, 
we say that \emph{$\mc S$ satisfies $\Phi$}, denoted $\mc M \models \Phi$, 
if and only if for every $v \in [0,1]$ and every 
$v$-hopeful input sequence $\sigma_\inp \in \Hopeful{v, \Phi}$, it holds that 
$\sema{\Phi,\sigma_\inp \parallel \mc M(\sigma_\inp)} \geq v$.
\end{definition}

\subsubsection{Compositional \ltlf\ Specifications}
In this paper we focus on a particular class of \ltlf\ specifications  for MAS, called \emph{compositional specifications}, introduced in \cite{DewesD23}, which studies the problem of automatic synthesis of MAS from such specifications.
Compositional  specifications  are given as a combination of local specifications for the individual agents and a shared requirement.
Formally,   a \emph{compositional specification} is of the form 
$ \Phi  = \comb(\varphi_\local^1,\ldots,\varphi_\local^n,\Psi_\shared)$ where
\begin{itemize}
\item[(1)] $\comb : [0,1]^{n+1} \to [0,1]$ is  non-decreasing in each subset of arguments:  if $\comb(v'_1,\ldots,v'_{n+1}) < \comb(v_1,\ldots,v_{n+1})$, then,  $v_i' < v_i$ for some $i \in [n+1]$.
\item[(2)] The \emph{shared specification} $\Psi_\shared$ is a \emph{safety} \ltlf\ specification. 
This means that 
for every $\sigma\in (\power{\AP})^{\omega}$ and 
$v \in \Vals{\Psi_\shared}$,  
if $\sema{\Psi_\shared,\sigma} < v$, 
then there exists a prefix $\sigma'$ of $\sigma$, 
such that $\sema{\Psi_\shared,\sigma' \cdot \sigma''} < v$ for every infinite continuation $\sigma''\in (\power{\AP})^{\omega}$ of $\sigma'$.
\item[(3)] For each agent $a$, the \emph{local specification} $\varphi_\local^a$  for agent $a$ refers only to atomic propositions in $\outp_a \cup \inp$.
\end{itemize}
Such specifications occur  in MAS where we have task specifications for individual agents describing their desired behavior,  as well as a shared safety constraint.
Going forward, we write $\Phi$ to mean a compositional specification $ \Phi  = \comb(\varphi_\local^1,\ldots,\varphi_\local^n,\Psi_\shared)$.

\section{Quantitative Assume-Guarantee Contracts}\label{sec:contracts}

In this section we introduce a new notion of contracts for MAS with compositional \ltlf\ specifications as defined in \Cref{sec:ge-def}.
These contracts, termed \emph{good-enough decomposition contracts} (\gedc) can express coordination between the agents while taking into account the quantitative nature of \ltlf\ requirements.
In contrast to traditional contracts used for compositional design and verification,  the type of contracts we propose allows the specification of the quantitative contribution of each agent's local requirements to the satisfaction value of the overall specification. 
This,  in turn,  allows for assumptions and guarantees that impose weaker restrictions on agents' behavior by stating \emph{what} the satisfaction values of the local specifications should be, and not \emph{how} these values should be achieved.
After introducing \gedc , we define what it means for a MAS to satisfy a \gedc, and show that this notion is sound and complete.

A \gedc\ for a compositional \ltlf\ specification $\Phi$  consists of two components.
The first,  called \emph{allowed decompositions map},  specifies for each possible value $v$ of $\Phi$ a set of \emph{value decompositions} describing how the satisfaction of each local specification and the shared specification contribute to achieving value $v$ for $\Phi$. 
The second component of a \gedc\ associates with each value $v$ an \emph{assume-guarantee contract} that specifies,  in terms of behaviors,  how the agents cooperate to achieve the desired satisfaction values.\looseness=-1

We begin with the definition of value decompositions.
Intuitively,  the decompositions of $v$ with respect to $\Phi = \comb(\varphi_\local^1,\ldots,\varphi_\local^n,\Psi_\shared)$ are tuples of values for $\varphi_1,\ldots,\varphi_n,\Psi_\shared$ such that their combination results in $v$.
Formally,  given  $v \in \Vals{\Phi}$,  a \emph{value decomposition} of $v$ is a vector 
$\distr \in (\bigtimes_{a=1}^n \Vals{\varphi_\local^a} )\times \Vals{\Psi_\shared}$ where 
$\comb(\distr) = v$.
If $\distr = (\langle u_a \rangle_{a=1}^n, w)$,  we denote the  components of $\distr$ by 
$\distr[a] : = u_a$ for $a\in[n]$,  and $\distr[s] := w$.
For every $v\in \Vals{\Phi}$, we denote with $\Distri(\Phi,v) \subseteq \reals^{n+1}$ the set consisting of all possible value decompositions of $v$.  

Next, we define the notion of \emph{allowed decompositions map}  for $\Phi$.  
Intuitively, the set $\alldec$ associates with each $v \in \Vals{\Phi}$ the value decompositions allowed by the contract for $v$-hopeful input sequences.  
To ensure generality,  $\alldec$  allow different sets of decompositions to be associated with different subsets $H$ of $\Hopeful{v, \Phi}$.

\begin{definition} \label{def:a-decomp} 
    An \emph{allowed decompositions map} for $\Phi$ is a set $\alldec \subseteq \Vals{\Phi} \times \power{((\power{\inp})^\omega)} \times \power{\bigcup_{v \in \Vals{\Phi}} \Distri(\Phi,v)}$ where:
    \begin{enumerate}[label=(\roman*)]
            \item\label{cond:adec-inp} \textit{(Input coverage)} 
For every $v \in \Vals{\Phi}$ and $\sigma_\inp \in \Hopeful{v, \Phi}$,  there exists $(v',\hopef,\decset) \in \alldec$ such that $v'\geq v$ and $\sigma_\inp \in \hopef$. 
This ensures that every $v$-hopeful input sequence is covered by some set $\hopef$ for value $v' \geq v$. 
        \item\label{cond:adec-sat} \textit{(Satisfiable decompositions)}
For every $(v,\hopef,\decset) \in \alldec$,  $D \subseteq \Distri(\Phi,v)$ and for every $\distr \in \decset$ there exists $\sigma \in (\power{\inp \cup \outp})^\omega$
such that  $\sema{\Psi_\shared, \sigma} = \distr[s]$ and $\sema{\varphi_\local^a, \sigma} = \distr[a]$ for all $a \in [n]$.
That is,  $\decset$ only contains satisfiable decompositions from $\Distri(\Phi,v)$.
    \end{enumerate}
\end{definition}

For every input sequence $\sigma_\inp \in (\power{\inp})^\omega$ and $v \in \Vals{\Phi}$ we define 
$\alldec(v,\sigma_\inp) := \{\distr \mid\exists (v,\hopef,\decset) \in \alldec.~\sigma_\inp \in \hopef  \text{ and } \distr \in \decset\}$
to be the set of decompositions of $v$ allowed by $\alldec$ for the input sequence $\sigma_\inp$. 
Since every input sequence $\sigma_\inp$ is hopeful for some value in $\Vals{\Phi}$,
condition \ref{cond:adec-inp} in \Cref{def:a-decomp} implies that 
$\alldec(v,\sigma_\inp)$ is non-empty for some $v$.
This ensures that a system which conforms with
$\alldec$ has appropriate obligations w.r.t. every input sequence. 
Furthermore,  condition \ref{cond:adec-sat} guarantees that these obligations are met by satisfiable value decompositons.

For a given specification $\Phi$,  there may be multiple possible allowed decompositions maps that differ in the obligations they impose on the individual agents. 
Next,  we show the possible value decompositions for the formula in \Cref{ex:intro} and one possible allowed decompositions map.

\begin{example}\label{ex:adec} %
    The worst-case satisfaction value for  $\Phi$ in \Cref{ex:intro} is $\frac{5}{6}$.  
    If  \badw\ and \call\  are constantly true, this is the best that can be  achieved:
    $\Psi_\shared$ must be satisfied,  meaning that $a_1$ or $a_2$ must travel in bad weather, or not be at the base often enough.
    The four potential value decompositions for $v = \frac{5}{6}$ are $(\frac{1}{2},\frac{1}{2},1,1)$,  $(0,1,1,1)$,  $(1,0,1,1)$, $(1,1,0,1)$.
    The last tuple is impossible to achieve in the described environment.
$(\frac{1}{2},\frac{1}{2},1,1)$ corresponds to case where both $a_1$ and $a_2$ travel in bad weather. 
The other two, to either $a_1$ or $a_2$ not being at the base for two steps infinitely often.
   All three are possible for any $\frac{5}{6}$-hopeful input sequence.
    A most permissive $\alldec$ will thus contain $(\frac{5}{6}, \Hopeful{\frac{5}{6}},$$\{(\frac{1}{2},\frac{1}{2},1,1),$$(0,1,1,1), (1,0,1,1)\})$.  An alternative is to take a subset,  enforcing specific priorities.%
\end{example}

While allowed decompositions maps impose requirements on the contribution by each agent to the overall satisfaction value, 
they do not specify \emph{how} the agents cooperate to achieve these values.
This is done by providing for each $v \in \Vals{\Phi}$ an \emph{assume guarantee contract}  which specifies  what assumptions each agent makes on the behavior of other agents and what guarantees it provides in return. 
We recall the definition  from \cite{DewesD23}.
\begin{definition}\label{def:ag-contract}
An \emph{assume-guarantee contract} for $n$ agents is a tuple $\langle (A_a, G_a)\rangle_{a=1}^n$, where for every agent $a \in \Agents$:
\begin{itemize}
    \item $A_a, G_a \subseteq (\power{\AP})^\omega$ are safety languages defining the assumption and guarantee, respectively,  of agent $a$.
    \item $\bigcap_{a' \in \{1,\ldots,n\} \setminus \{a\}} G_{a'} \subseteq A_a$.
\item For every finite trace $\sigma \in (\power{\AP})^*$:
\begin{enumerate}[label=(\roman*)]
    \item\label{cond:ag1} If there exists an infinite word  $\sigma' \in (\power{\AP})^\omega$ such that $\sigma\cdot\sigma' \in A_a$,  then, for every $o_a \in \power{\outa}$,  there exists  $\sigma'' \in (\power{\AP})^\omega$ with $\sigma\cdot\sigma'' \in A_a$ and $\proj{\sigma''[0]}{\outa} = o_a$.\looseness=-1
    \item\label{cond:ag2} If there exists an infinite word  $\sigma' \in (\power{\AP})^\omega$ such that $\sigma\cdot\sigma' \in G_a$, then, for every $o_{\overline a} \in \power{\outnota}$,  there exists  $\sigma'' \in (\power{\AP})^\omega$ with $\sigma\cdot\sigma'' \in G_a$ and $\proj{\sigma''[0]}{\outnota} = o_{\overline a}$.\looseness=-1
\end{enumerate}
\end{itemize}
\end{definition}
\noindent
The above conditions ensure two properties.
Agent $a$ cannot violate its own assumption $A_a$ by selecting a bad output $o_a$. 
The remaining agents cannot violate the guarantee which agent $a$ must provide by selecting a bad output $o_{\overline a}$.

We are now ready to give the definition of a good-enough decomposition contract,  which pairs up an allowed decompositions map with a function mapping each $v \in \Vals{\Phi}$ to an assume-guarantee contract.
As a result,  for each $v \in \Vals{\Phi}$, the contract prescribes allowed decompositions,  depending on the $v$-hopeful input sequences, and an associated assume-guarantee  contract $\langle (A_a^v, G_a^v) \rangle_{a=1}^n$. 

\begin{definition}[Good-Enough Decomposition Contract]\label{def:decomp-contract}
    Let $\Phi$ be a compositional \ltlf\ specification over
$\AP = \inp \uplus \outp$, where 
$\outp = \outp_1,\ldots,\outp_n$ is a partitioning of $O$ 
among $n$ agents.
Let 
$\alldec$ be a map of allowed decompositions for $\Phi$, and let 
$\agv$ be a function that maps each $v \in \Vals{\Phi}$ 
to an assume-guarantee contract 
$\langle (A_a^v, G_a^v) \rangle_{a=1}^n$.
    We say that $(\alldec,\agv)$ is a \emph{good-enough decomposition contract} for $\Phi$ and $\outp_1,\ldots,\outp_n$ 
    if and only if 
    for every $v \in \Vals{\Phi}$ and 
    $\sigma_\inp \in  \Hopeful{v, \Phi}$,  
    there exist 
    $(v',\hopef,\decset) \in \alldec$,  $\distr \in \decset$ and
    $\sigma_\outp \in (\power{\outp})^\omega$, 
     such that $v' \geq v$ and the following hold:
    \begin{enumerate}[label=(\roman*)]
        \item \label{cond:dc-1} $\sigma_\inp \in \hopef$,  $\sema{\Psi_\shared, \sigma_\inp \parallel \sigma_\outp} = \distr[s]$, 
        \item \label{cond:dc-2} $\sema{\varphi_\local^a, \sigma_\inp \parallel \sigma_\outp} = \distr[a]$ for all $a \in [n]$, and
        \item \label{cond:dc-3} $(\sigma_\inp \parallel \sigma_\outp) \in \bigcap_{a \in \{1, \ldots, n\}} A_a^{v'}$.
    \end{enumerate}
\end{definition}

The conditions in \Cref{def:decomp-contract} ensure compatibility between $\alldec$ and $\agv$.
More concretely,  for each $v$-hopeful input sequence $\sigma_\inp$ there exists a corresponding output trace that agrees with some allowed value decomposition for value $v' \geq v$, 
and, in addition is consistent with the assumptions of all agents associated with value $v'$.
The next example illustrates the importance of these conditions. 

\begin{example}\label{ex:contract} %
    A \gedc\ for  $\Phi$ in \Cref{ex:intro}  combines some $\alldec$ with an assume-guarantee contract to capture the coordination between the agents.
    First, in input sequences where \call\ is true at least once, each agent must rely on help from other agents to cover $\Psi_\shared$.  More concretely,  agents $a_1$ and $a_2$ assume that $a_3$ will always respond.
    
    When the best possible value is $v=\frac{5}{6}$,
    $a_1$ and $a_2$ must coordinate whether one of them should respond to all calls,  
    or if both will reduce their local satisfaction value to $\frac{1}{2}$. 

    This is reflected in $\alldec$ and the assume-guarantee contract.  
    If  agent $a_1$ guarantees to cover all calls in bad weather for some input sequences,  and agent $a_2$  guarantees the same for all of the remaining input sequences,  then value decomposition $(\frac{1}{2},\frac{1}{2},1,1)$ is not attainable,  and both $(0,1,1,1)$ and $(1,0,1,1)$ must be in $\alldec$.
    Alternatively,  if the assumptions and guarantees force $a_1$ and $a_2$ to alternate responding to calls,  with both potentially travelling in bad weather, then $(\frac{1}{2},\frac{1}{2},1,1)$ is the only matching  value decomposition for  value $v=\frac{5}{6}$.

    Since a \gedc\ associates assumptions with each value $v$, it can enforce coordination for worst-case input sequences,  while giving the agents more freedom when the environment  is more helpful (allows a higher satisfaction value).
\end{example}

\emph{Remark.}
\gedc\ are more general than the contracts in \cite{DewesD23} in two key aspects.
First,  \gedc\ allow us to relate obligations to the hopefulness level of input sequences with respect to the overall specification,  instead of being independent of the global combined value.
Second, \gedc\ include explicit value decompositions which offer a more natural way to quantify the required contribution of each agent, as well as flexibility in specifying obligations.
This results in  a complete decomposition rule.

\smallskip

The value decompositions of an allowed decompositions map $\alldec$ in a \gedc\ specify obligations for each agent $a$ in terms of values for the local and the shared specification which $a$ must ensure.  
An agent $a$ should meet its obligation with respect to the value of $\Psi_\shared$ only if the other agents provide output sequences that permit that,  that is, if \emph{behaviors of the other agents are collaborative}. 
Otherwise,  $a$ is expected to achieve the best possible value for $\varphi_\local^a$. 
We call a sequence $\sigma_{\outnota} \in (\power{\outnota})^\omega$ 
\emph{$\decset$-collaborative for input sequence $\sigma_\inp$}, 
if there exists an output sequence 
$\sigma_{\outa} \in (\power{\outa})^\omega$ for agent $a$ and a value decomposition $\distr \in \decset$ where
\begin{itemize}
    \item $\sema{\varphi_\local^a, (\sigma_\inp \parallel \sigma_{\outnota} \parallel \sigma_{\outa})} \geq \distr[a]$, and
    \item $\sema{\Psi_\shared, (\sigma_\inp \parallel \sigma_{\outnota} \parallel \sigma_{\outa})} \geq \distr[s]$.
\end{itemize}
We denote by $\collab{\decset, \sigma_\inp}_{\nota}$ the set of all such sequences.

Intuitively,  $\decset$-collaborative behaviors of $\overline a$ are those that permit agent $a$ to provide output such that the resulting values for $\varphi_\local^a$ and $\Psi_\shared$ conform to at least one decomposition in $\decset$. 
This is illustrated by the following example.

\begin{example} %
In the setting of \Cref{ex:intro},  for input sequences where \call\ is always true and  \badw\ is always false,  a value $v=1$ for $\Phi$ is possible with $\decset = \{(1,1,1,1)\}$.
In this case,  the $\decset$-collaborative outputs for $a_1$ are those where agents $a_2$ and $a_3$ take care of \emph{enough} calls such that $a_1$ can achieve full satisfaction of $\varphi_\local^1$.
    More specifically, if $a_2$  sets $\atsite_2$ to true  every $k$ steps, that will give $a_1$ enough freedom to set $\atbase_1$ to true for two consecutive turns every $k$ steps,  and still meet its obligations towards $\Psi_\shared$.
\end{example}

We will now state what it means for a MAS $\mc S$ to satisfy a \gedc.
Intuitively,  the implementation $M_a$ of each agent $a$ should satisfy certain obligations for each $v \in \Vals{\Phi}$ 
and combination of $v$-hopeful input sequence $\sigma_\inp$ with outputs of the agents in $\overline a $ satisfying the assumption $A_a^v$.  
If the outputs of the agents in $\overline a $ are collaborative for $\sigma_\inp$ and $D$, the obligation is determined by the value distributions in $D$, and otherwise, by the best value of $\varphi^a_\local$ possible for $\sigma_\inp$.

\begin{definition}[\gedc\ Satisfaction]\label{def:contract-sat}
Let $\Phi$ be a compositional \ltlf\ specification over $\AP = \inp \uplus \outp$,  
$\outp_1,\ldots,\outp_n$ be a  partitioning of $O$, 
and let $(\alldec,\agv)$ be a \gedc\ for $\Phi$ and $\outp_1,\ldots,\outp_n$.
We say that a MAS $\mc S = (\inp,  \outp,\mc M)$ \emph{satisfies a \gedc\ $(\alldec,\agv)$} iff 
    for every agent $a \in \Agents$,  its implementation $M_a$ satisfies the following condition.
    
    For every 
     $\sigma_\inp \in (\power{\inp})^\omega $ and every 
    sequence of outputs of the other agents
    $\sigma_{\outnota} \in (\power{\outnota})^\omega$,
    for every $(v,\hopef,D) \in \alldec$ with 
    $\sigma_\inp \in \hopef$  and 
    $(\sigma_\inp \parallel M_a(\sigma_\inp \parallel \sigma_{\outnota})\parallel \sigma_{\outnota})\in A_a^v$, 
    where $\agv(v) = \langle (A_a^v, G_a^v)\rangle_{a=1}^n$,  
    \emph{all} of the following hold:
    
    \begin{enumerate}[label=(\roman*)]
    \item \label{eq:local-ge-1}
    If $\sigma_{\outnota} \in \collab{D,\sigma_\inp}_{\nota}$, then for all $\distr \in D$,
    
    \smallskip
    \noindent
    if there is an output sequence $\sigma_{\outp_a} \in (\power{\outp_a})^\omega$, with 
    \begin{itemize}
    \item $\sema{\varphi_\local^a, (\sigma_\inp \parallel \sigma_{\outp_a}\parallel \sigma_{\outnota})} \geq \distr[a]$, and
    \item $\sema{\Psi_\shared,(\sigma_\inp \parallel \sigma_{\outp_a} \parallel \sigma_{\outnota})} \geq \distr[s]$, 
    \end{itemize}
    
    \smallskip
    \noindent
    then,  there exists $\distr' \in D$ where\looseness=-1
    
    \begin{itemize}
    \item $\distr'[a'] = \distr[a']$ for all $a' \neq a$, and
    \item $\sema{\varphi_\local^a,(\sigma_\inp \parallel M_a(\sigma_\inp \parallel \sigma_{\outnota})\parallel \sigma_{\outnota})} \geq \distr'[a]$, 
    \item $\sema{\Psi_\shared,(\sigma_\inp \parallel M_a(\sigma_\inp \parallel \sigma_{\outnota}) \parallel \sigma_{\outnota})} \geq \distr'[s]$.
    \end{itemize}
    
    \item \label{eq:local-ge-2}
    If $\sigma_{\outnota} \in \collab{D,\sigma_I}_{\nota}$,  then the guarantee $G_a^v$ is satisfied,  that is, 
    $(\sigma_\inp \parallel M_a(\sigma_\inp \parallel \sigma_{\outnota}) \parallel \sigma_{\outnota})\in  G_a^v$.
    
    \smallskip
    
    \item \label{eq:local-ge-3}
    If $\sigma_{\outnota} \not\in \collab{D,\sigma_I}_{\nota}$, 
    then for $u \in \Vals{\varphi^a_\local}$,   
    if it holds that $\sigma_\inp \in \Hopeful{u,\varphi^a_\local}$,  then it also holds that
     $\sema{\varphi_\local^a,(\sigma_\inp \parallel M_a(\sigma_\inp \parallel \sigma_{\outnota})\parallel \sigma_{\outnota})} \geq u$.
    
    \end{enumerate}
    
\end{definition}

Generally,  for the contract to be satisfied,  the output $M_a(\sigma_\inp \parallel \sigma_{\outnota})$ produced by an agent $a$ must \ref{eq:local-ge-1} achieve best-effort satisfaction for $\varphi_\local^a$ and $\Psi_\shared$ depending on $\sigma_\inp$, 
in accordance with the decompositions $\distr$ contained in $\alldec$ and \ref{eq:local-ge-2} satisfy the guarantees $G_a^v$ as specified by the contract $(\alldec,\agv)$.
What section of the contract applies primarily depends on which set $\hopef$ the sequence $\sigma_\inp$ belongs to.
There are two exceptions to the above requirement,  depending on the  behavior $\sigma_{\outnota}$ of other agents $\nota$.
First, if some of the assumptions $A_a^v$ is violated,  agent $a$ is free from any obligations. 
Second,  as stated by \ref{eq:local-ge-3},  if $\sigma_{\outnota}$ is not collaborative,  agent $a$ must only maximize the satisfaction of $\varphi_\local^a$.

The next theorem states that {\gedc}s are \emph{sound}.  
That is,  a system that satisfies a \gedc\ $(\alldec,\agv)$ for a compositional \ltlf\ specification $\Phi$,  is guaranteed to satisfy $\Phi$.

\begin{theorem}[Soundness of \gedc]\label{thm:decomp-soundness}
Let $\Phi$ be a compositional \ltlf\ specification and $(\alldec,\agv)$ be a \gedc\ for $\Phi$.
Then,  for every MAS $\mc S = (\inp,  \outp,\langle M_1,\ldots,M_n\rangle)$ it holds that if $\mc S \models (\alldec,\agv)$,  then $\mc S \models \Phi$.
\end{theorem}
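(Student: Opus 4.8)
The plan is to fix a MAS $\mc S = (\inp, \outp, \langle M_1,\ldots,M_n\rangle)$ with $\mc S \models (\alldec,\agv)$, take an arbitrary $v \in [0,1]$ and an arbitrary $v$-hopeful input sequence $\sigma_\inp \in \Hopeful{v,\Phi}$, and show that $\sema{\Phi, \sigma_\inp \parallel \mc M(\sigma_\inp)} \geq v$; by \Cref{def:spec-sat} this establishes $\mc S \models \Phi$. Write $\sigma_\outp := \mc M(\sigma_\inp)$ for the composed output and $\sigma := \sigma_\inp \parallel \sigma_\outp$. Since $(\alldec,\agv)$ is a \gedc, \Cref{def:decomp-contract} supplies some $(v',\hopef,\decset) \in \alldec$ with $v' \geq v$, some $\distr^\ast \in \decset$, and some $\sigma_\outp^\ast$ with $\sigma_\inp \in \hopef$, $(\sigma_\inp \parallel \sigma_\outp^\ast) \in \bigcap_a A_a^{v'}$, and the values of $\varphi_\local^a$ and $\Psi_\shared$ on $\sigma_\inp \parallel \sigma_\outp^\ast$ matching $\distr^\ast$. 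This fixes the target value decomposition and, crucially, the value $v'$ (and hence the contract $\agv(v') = \langle (A_a^{v'}, G_a^{v'})\rangle_a$) that we will instantiate the satisfaction conditions with. Because $\comb$ is non-decreasing (property (1) of compositional specifications) and $\comb(\distr^\ast) = v' \geq v$, it suffices to prove that for every agent $a$, $\sema{\varphi_\local^a, \sigma} \geq \distr^\ast[a]$, and that $\sema{\Psi_\shared, \sigma} \geq \distr^\ast[s]$.

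The core of the argument is an inductive / simultaneous reasoning over all agents to show the assumptions $A_a^{v'}$ are actually satisfied along the real run $\sigma$, so that \Cref{def:contract-sat} applies to each $M_a$ with $\sigma_{\outnota} := \proj{\sigma_\outp}{\outnota}$. The key observation is that $\sigma_\outp^\ast$ witnesses that $\sigma_{\outnota}^\ast := \proj{\sigma_\outp^\ast}{\outnota}$ is $\decset$-collaborative for $\sigma_\inp$ — but we need collaborativeness of the \emph{actual} $\sigma_{\outnota}$, and we need $\sigma \in A_a^{v'}$. Here I would exploit the safety-language structure of the $A_a^v$ and $G_a^v$ together with the mutual-refinement condition $\bigcap_{a' \neq a} G_{a'}^{v'} \subseteq A_a^{v'}$ from \Cref{def:ag-contract}, running the standard assume-guarantee "no first violation" argument: suppose for contradiction some agent's assumption is violated along $\sigma$; take the earliest prefix at which any $A_a^{v'}$ or $G_a^{v'}$ is violated; argue that up to that point all relevant assumptions held, so by conditions \ref{eq:local-ge-1}–\ref{eq:local-ge-2} of \Cref{def:contract-sat} each agent's output has kept its guarantee $G_a^{v'}$ (and the collaborativeness hypotheses remain in force, using the second property of \Cref{def:decomp-contract} and the $\decset$-collaborative witness $\sigma_\outp^\ast$), whence by $\bigcap_{a' \neq a} G_{a'}^{v'} \subseteq A_a^{v'}$ no assumption can be the first to break — contradiction. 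Conditions \ref{cond:ag1}–\ref{cond:ag2} of \Cref{def:ag-contract} (an agent cannot unilaterally violate its own assumption, nor can the others unilaterally violate the guarantee it owes) are exactly what make this prefix-chasing argument go through over Moore-machine compositions.

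Once $\sigma \in \bigcap_a A_a^{v'}$ is established and $\sigma_{\outnota} \in \collab{\decset, \sigma_\inp}_{\nota}$ for every $a$ (the latter following because $\decset$-collaborativeness is a property of the full composed run and $\sigma_\outp^\ast$ matches $\distr^\ast \in \decset$ — though one must be careful: collaborativeness of $\sigma_{\outnota}$ is about what \emph{some} output of $a$ could achieve, and we must derive it from the existence of $\sigma_\outp^\ast$; this is where I'd check whether an extra argument or hypothesis is needed, and it is the step I expect to be the main obstacle), I apply \Cref{def:contract-sat}\ref{eq:local-ge-1} to each $a$ with the decomposition $\distr^\ast$ and the witness output $\proj{\sigma_\outp^\ast}{\outp_a}$: this yields $\distr' \in \decset$ with $\distr'[a'] = \distr^\ast[a']$ for $a' \neq a$ and with the \emph{actual} run $\sigma$ achieving $\sema{\varphi_\local^a, \sigma} \geq \distr'[a]$ and $\sema{\Psi_\shared, \sigma} \geq \distr'[s]$. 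Assembling these across all $a$: the shared value is directly $\geq \distr^\ast[s]$ (take any one agent's bound, since $\distr'[s]$ — hmm, actually $\distr'$ may differ from $\distr^\ast$ in the $a$- and $s$-components; I'd argue $\distr'[a] \geq \distr^\ast[a]$ is not claimed, rather that all of $\distr^\ast[a'], a'\neq a$ are preserved, so a careful bookkeeping shows each local value $\sema{\varphi_\local^a,\sigma}$ is at least the $a$-component of \emph{some} decomposition in $\decset$ agreeing with $\distr^\ast$ off coordinate $a$, and since $\comb$ is componentwise non-decreasing the resulting combined value is still $\geq v'$). Feeding these bounds through monotonicity of $\comb$ gives $\sema{\Phi,\sigma} = \comb(\langle \sema{\varphi_\local^a,\sigma}\rangle_a, \sema{\Psi_\shared,\sigma}) \geq v' \geq v$, completing the proof.
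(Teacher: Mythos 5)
There is a genuine gap, and it is exactly the one you flag yourself as ``the main obstacle'': you cannot derive $\decset$-collaborativeness of the \emph{actual} other-agent outputs $\sigma_{\outnota} = \proj{\mc M(\sigma_\inp)}{\outnota}$ from the existence of the ideal witness $\sigma_\outp^\ast$ supplied by \Cref{def:decomp-contract}. Collaborativeness is a property of the concrete $\sigma_{\outnota}$ (it asks whether \emph{some} output of $a$ together with that very $\sigma_{\outnota}$ realizes a decomposition in $\decset$), and nothing in \Cref{def:contract-sat} forces the implementations to reproduce anything resembling $\proj{\sigma_\outp^\ast}{\outnota}$; indeed condition \ref{eq:local-ge-3} explicitly anticipates that the actual run may be non-collaborative. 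Your ``no first violation'' argument inherits the same problem: condition \ref{eq:local-ge-2} only obliges agent $a$ to uphold $G_a^{v'}$ \emph{when} the others are collaborative, so once collaborativeness is in doubt the circular $G\Rightarrow A$ step does not close, and the prefix-chasing argument stalls. Finally, even granting both premises, your per-agent applications of \ref{eq:local-ge-1} produce a possibly different $\distr'_a \in \decset$ for each $a$ (agreeing with $\distr^\ast$ only off coordinate $a$, and possibly differing in the shared coordinate), and these cannot in general be merged by monotonicity of $\comb$ into a single componentwise bound witnessing value $v'$ — your own hedging (``careful bookkeeping'') is where this breaks.

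The paper's proof supplies precisely the machinery your sketch lacks. It first proves, by induction on prefix length, a ``Property 1'': every finite prefix of the actual trace $\sigma'$ can be extended to an infinite continuation that achieves value at least $v'$, lies in all assumptions $A_a^{v'}$, and is collaborative for every agent. The induction step splices, agent by agent, the Moore-machine outputs on the prefix with collaborative continuations derived from the witness trace, invoking \ref{eq:local-ge-1} and \ref{eq:local-ge-2} on these \emph{hybrid} traces (where collaborativeness holds by construction) rather than on the actual run. Then it splits cases: if for some agent the actual $\sigma'_{\outnota}$ is collaborative and $\sigma' \in A_a^{v'}$, condition \ref{eq:local-ge-1} applies; otherwise, safety of the $A_a^{v'}$ together with Property 1 rules out assumption violation, so the actual outputs are non-collaborative for \emph{every} agent, condition \ref{eq:local-ge-3} forces each local value on $\sigma'$ to be at least its value on the witness, and a contradiction argument combining the safety of $\Psi_\shared$, Property 1, and monotonicity of $\comb$ excludes $\sema{\Phi,\sigma'} < v$. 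Your proposal has no counterpart to the non-collaborative case or to the prefix-extension property, and without them the argument does not go through.
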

\emph{Proof Sketch.}
To show that $\mc S \models \Phi$, we show for every 
$\sigma_\inp \in \Hopeful{v, \Phi}$ that $\sema{\Phi,\sigma_\inp \parallel \mc M(\sigma_\inp)} \geq v$. 
To do so, we prove that for some $a \in \Agents$, the outputs of the other agents are $\decset$-collaborative for $\sigma_\inp$ and some $\decset$, and satisfy the assumption $A_a^v$.  Applying \ref{eq:local-ge-1} in \Cref{def:contract-sat} this yields $\sema{\Phi,\sigma_\inp \parallel \mc M(\sigma_\inp)} \geq v$.
To establish the existence of  agent $a$ with this property,  we use the 
fact that $\Psi_\shared$ is a safety \ltlf\ formula and that \ref{eq:local-ge-3} enforces maximizing the satisfaction value of the local specifications for non-collaborative inputs,  to show that the converse is impossible.
\qed 

{\gedc}s are also complete,  i.e.,  if $\mc S \models \Phi$,  then there exists a \gedc\ $(\alldec,\agv)$ for $\Phi$ with
$\mc S \models (\alldec,\agv)$.

\begin{theorem}\label{thm:decomp-completeness}
Let $\Phi$ be a compositional specification.
If $\mc S = (\inp,  \outp,\langle M_1,\ldots,M_n\rangle)$ is a MAS with $\mc S \models \Phi$, 
then there exists a \gedc\ $(\alldec,\agv)$  for $\Phi$ where  $\mc S \models (\alldec,\agv)$.\looseness=-1
\end{theorem}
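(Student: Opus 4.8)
The plan is to build a \gedc\ tailored to the witness system $\mc S$, making the contract as tight as possible around the behaviours of $\mc S$ so that conformance is almost immediate. For an input sequence $\sigma_\inp$, write $\tau_{\sigma_\inp} := \sigma_\inp \parallel \mc M(\sigma_\inp)$ for the trace that $\mc S$ generates, $v^*(\sigma_\inp) := \sema{\Phi, \tau_{\sigma_\inp}} \in \Vals{\Phi}$, and let $\distr_{\sigma_\inp}$ be the value decomposition with $\distr_{\sigma_\inp}[a] := \sema{\varphi_\local^a, \tau_{\sigma_\inp}}$ for $a \in [n]$ and $\distr_{\sigma_\inp}[s] := \sema{\Psi_\shared, \tau_{\sigma_\inp}}$. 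Since $\comb$ is applied componentwise in the \ltlf\ semantics, $\comb(\distr_{\sigma_\inp}) = v^*(\sigma_\inp)$, so $\distr_{\sigma_\inp} \in \Distri(\Phi, v^*(\sigma_\inp))$.

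For the assume-guarantee component I take, for every $v$, $G_a^v := \mc L(M_a)$ and $A_a^v := \bigcap_{a' \neq a} \mc L(M_{a'})$. Each $\mc L(M_{a'})$ is a safety language, since a violation is witnessed by the finite prefix at which an output first deviates from the Moore machine's behaviour, and safety languages are closed under intersection; hence $A_a^v$ and $G_a^v$ are safety languages, and $\bigcap_{a' \neq a} G_{a'}^v = A_a^v$ gives the required inclusion. The conditions \ref{cond:ag1} and \ref{cond:ag2} of \Cref{def:ag-contract} hold precisely because the agents are Moore machines: for a finite trace still consistent with the relevant machines, the next outputs of those machines are already determined by the prefix, so one may freely fix agent $a$'s next output (for $A_a^v$), respectively the other agents' next output (for $G_a^v$), complete the current letter with the forced values, and run the relevant machines forward on arbitrary further inputs to obtain a legal infinite continuation. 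Thus $\agv(v) := \langle (A_a^v, G_a^v) \rangle_{a=1}^n$ is an assume-guarantee contract for every $v \in \Vals{\Phi}$.

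For the allowed decompositions map I take $\alldec := \{\, (v^*(\sigma_\inp), \{\sigma_\inp\}, \{\distr_{\sigma_\inp}\}) \mid \sigma_\inp \in (\power{\inp})^\omega \,\}$. Input coverage, condition \ref{cond:adec-inp} of \Cref{def:a-decomp}, holds because $\mc S \models \Phi$ forces $v^*(\sigma_\inp) \geq v$ whenever $\sigma_\inp \in \Hopeful{v, \Phi}$, and then the triple for $\sigma_\inp$ covers it; satisfiability of decompositions, condition \ref{cond:adec-sat}, is witnessed by $\tau_{\sigma_\inp}$ itself. The compatibility conditions of \Cref{def:decomp-contract} are met, for a $v$-hopeful $\sigma_\inp$, by that same triple together with $\distr_{\sigma_\inp}$ and $\sigma_\outp := \mc M(\sigma_\inp)$: conditions \ref{cond:dc-1} and \ref{cond:dc-2} hold by the definition of $\distr_{\sigma_\inp}$, and condition \ref{cond:dc-3} holds because $\tau_{\sigma_\inp}$, being generated by the composition, lies in $\mc L(M_{a'})$ for every $a'$, hence in $\bigcap_{a \in [n]} A_a^{v^*(\sigma_\inp)}$.

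It remains to verify $\mc S \models (\alldec, \agv)$. The crux is that a trace $T := \sigma_\inp \parallel M_a(\sigma_\inp \parallel \sigma_{\outnota}) \parallel \sigma_{\outnota}$ belongs to $A_a^v = \bigcap_{a' \neq a} \mc L(M_{a'})$ if and only if $T = \tau_{\sigma_\inp}$, and in particular $\sigma_{\outnota} = \proj{\mc M(\sigma_\inp)}{\outnota}$: membership makes $T$ consistent with every $M_{a'}$ with $a' \neq a$, and $T$ is consistent with $M_a$ by the choice of $M_a(\sigma_\inp \parallel \sigma_{\outnota})$, so $T$ is the unique joint fixed point of all $n$ machines on input $\sigma_\inp$. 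Now fix $a$, $\sigma_\inp$, and $\sigma_{\outnota}$. The only element of $\alldec$ whose input set contains $\sigma_\inp$ is $(v^*(\sigma_\inp), \{\sigma_\inp\}, \{\distr_{\sigma_\inp}\})$, and obligations arise only when its assumption holds, i.e.\ when $T = \tau_{\sigma_\inp}$. In that case $\sigma_{\outnota} = \proj{\mc M(\sigma_\inp)}{\outnota}$ is $\{\distr_{\sigma_\inp}\}$-collaborative for $\sigma_\inp$, as witnessed by agent $a$'s part of $\mc M(\sigma_\inp)$ and the decomposition $\distr_{\sigma_\inp}$, so condition \ref{eq:local-ge-3} is vacuous; condition \ref{eq:local-ge-2} holds since $\tau_{\sigma_\inp} \in \mc L(M_a) = G_a^v$; and condition \ref{eq:local-ge-1} holds because $D = \{\distr_{\sigma_\inp}\}$ is a singleton, so the $\distr'$ it asks for is $\distr_{\sigma_\inp}$ and $\tau_{\sigma_\inp}$ realises exactly $\distr_{\sigma_\inp}[a]$ and $\distr_{\sigma_\inp}[s]$. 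The main obstacle is to make the assumptions $A_a^v$ simultaneously weak enough for the well-formedness conditions of \Cref{def:ag-contract} — which forbid an assumption from constraining the agent's own output — and strong enough to pin agent $a$'s environment down to the behaviour of the other machines of $\mc S$, so that the non-collaborative case \ref{eq:local-ge-3}, which $\mc S$ has no reason to satisfy against adversarial co-agents, never arises; both rest on the agents being Moore machines and on the fact that once agent $a$'s environment is pinned down, it is automatically collaborative.
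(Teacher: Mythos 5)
Your proposal is correct and follows essentially the same construction as the paper: guarantees $G_a^v := \mc L(M_a)$, assumptions $A_a^v := \bigcap_{a'\neq a}\mc L(M_{a'})$, and an allowed decompositions map built from the value decompositions actually realized by $\mc S$, with the key observation that satisfying the assumption pins the co-agents' outputs down to $\mc M(\sigma_\inp)$, making them automatically collaborative. The only (harmless, since \Cref{def:a-decomp} imposes no finiteness) difference is that you use one triple with a singleton input set per input sequence, whereas the paper groups all inputs realizing the same decomposition $\distr$ into a single set $\hopef_\distr$, which keeps $\alldec$ finite.
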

\emph{Proof Sketch.}
$\mc S$ produces unique outputs for each input sequence,  the resulting trace satisfying exactly one value decomposition $\distr$.
$\alldec$ is chosen to contain exactly the finitely many value decompositions realized by $\mc S$.  The input traces matching each $\distr$ form the respective set $H$.
For $\agv$, we define each $G_a^v$ as the safety language encoding the implementation $M_a$, and the  assumption $A_a^v$ as  $\bigwedge_{a' \neq a} {G^v_{a'}}$.
Thus, $\agv$ prescribes the precise implementation $M_a$ of each agent $a$.
The full proof is in the supplementary material.
\qed 

\section{Compositional Verification}\label{sec:algo}

In this section we present an automata-theoretic method for checking contract satisfaction as formalized in \Cref{def:contract-sat}:
Given a compositional \ltlf\ specification $\Phi$,  
a \gedc\ $(\alldec,\agv)$ for $\Phi$, 
and a system $\mc S = (\inp,  \outp,\langle M_1,\ldots,M_n\rangle)$,
decide whether $\mc S \models (\alldec,\agv)$.
Further,  we show how to check whether a given pair $(\alldec,\agv)$ is a \gedc\ for $\Phi$,  that is, it satisfies the conditions of \Cref{def:a-decomp},  \Cref{def:ag-contract} and \Cref{def:decomp-contract}.

By \Cref{thm:decomp-soundness},  if $\mc S \models (\alldec,\agv)$, we can conclude that 
$\mc S \models \Phi$.  
We remark that the specification $\Phi$ can also be checked directly,  using standard techniques, by constructing the product of $M_1, \ldots, M_n$.
This, however, is often undesirable.  
Working with the full specification and the product system hampers scalability.
Furthermore,  a \gedc\ allows for the independent re-implementation of the individual agents as long as the conditions in  \Cref{def:contract-sat} are satisfied.

\subsubsection{Procedure for Checking Contract Satisfaction}
We first recall necessary automata definitions and constructions.

A \emph{generalized nondeterministic B\"uchi automaton} (GNBA) over an alphabet 
$\Sigma$ is a tuple $\mc{A} = (\Sigma, Q,\delta,Q_0,\alpha)$, where
$Q$ is a finite set of states, $Q_0 \subseteq Q$ is a set of
initial states, $\delta : Q \times \Sigma  \to 2^Q$ is the transition function, 
and $\alpha \subseteq 2^Q$ a set of sets of accepting states.
A run $\rho$ of a GNBA is accepting iff for every $F \in \alpha$, 
$\rho$ visits $F$ infinitely often.
A word $\sigma \in \Sigma^\omega$ is accepted by $\mc A$ if there exists an accepting run of $\mc A$ on $\sigma$.
The language of $\mc A$ is $\mc L(\mc{A}) := \{ \sigma \in \Sigma^\omega \mid \sigma \text{ is accepted by }\mc A\}$.  

In \cite{AlmagorBK16} it is shown that for any \ltlf\ formula $\varphi$ over $\AP$, and 
$V \subseteq \Vals{\varphi}$, 
there exists a GNBA $\mc{A}_{\varphi,V}$ such that 
$\mc L(\mc{A}_{\varphi,V})$ consists of the words $\sigma \in (\power{\AP})^{\omega}$ with $\llbracket \varphi, \sigma \rrbracket \in V$, and
$\mc{A}_{\varphi,V}$ has at most $2^{(|\varphi|^2)}$ states and at most $|\varphi|$ sets of accepting states.
We employ this construction to reduce the problem of checking contract satisfaction to checking emptiness of GNBA.

For the remainder of this section we assume that $(\alldec,\agv)$ is finitely represented using $\omega$-automata. 
More concretely,  the set $H$  in each $(v,H,D)\in \alldec$ is represented by a GNBA $\mc H$ where $\mc L(\mc H) = \{\sigma_{\inp} \parallel \sigma_\outp \in (\power{\AP})^\omega \mid \sigma_{\inp} \in H \}$.  
For each  $v \in \Vals{\Phi}$,  the assume-guarantee contract  
$\agv(v) = \langle (A_a^v, G_a^v)\rangle_{a=1}^n$  is represented by tuple of pairs of GNBA with alphabet $\power{\AP}$,  one for the assumption language $A_a^v$,  and one for the complement $\power{\AP}\setminus G_a^v$ of the guarantee.  We denote these automata by 
$\mc A_a^v$  and
$\overline{\mc G_a^v}$.
We assume that all automata in the contract are given as nondeterministic B\"uchi automata, i.e.,  have one accepting set.

In the next proposition we outline the construction of a GNBA $\mc C_a$ for agent $a$, whose language consists of the traces that violate some of the conditions in \Cref{def:contract-sat}.
The full construction is given in the supplementary material.

\begin{proposition}\label{prop:gnba-correctness}
For each agent $a \in \Agents$,  we can construct a GNBA $\mc C_a$ such that  for any $\sigma = \sigma_\inp \parallel \sigma_\outp \in (\power{\AP})^\omega$  we have that
$\sigma \in  \mc L(\mc C_a)$ if and only if there exists  $(v, \hopef, D) \in \alldec$ such that 
$\sigma_{\inp} \in H$,  and $\sigma \in A_a^v$, and at least one of the conditions in \Cref{def:contract-sat} is violated for agent $a$.
\end{proposition}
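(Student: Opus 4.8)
\emph{Proof plan.}
Fix an agent $a \in \Agents$. The plan is to realize $\mc{C}_a$ as a Boolean combination of $\omega$-automata obtained from the contract automata and from the GNBA $\mc{A}_{\varphi,V}$ of \cite{AlmagorBK16}, where for a candidate trace $\sigma = \sigma_\inp \parallel \sigma_\outp$ we read $\proj{\sigma_\outp}{\outnota}$ as the sequence $\sigma_{\outnota}$ of the other agents' outputs and $\proj{\sigma_\outp}{\outp_a}$ as the output $M_a(\sigma_\inp \parallel \sigma_{\outnota})$ referenced in \Cref{def:contract-sat}. Since $\alldec$ is finite, it suffices to build, for each $(v,H,D) \in \alldec$, a GNBA $\mc{C}_a^{v,H,D}$ recognizing the traces $\sigma$ with $\sigma_\inp \in H$, $\sigma \in A_a^v$, and at least one of \ref{eq:local-ge-1}--\ref{eq:local-ge-3} violated for $a$, and then set $\mc{C}_a := \bigcup_{(v,H,D) \in \alldec} \mc{C}_a^{v,H,D}$ using closure of GNBA under union; the conjuncts $\sigma_\inp \in H$ and $\sigma \in A_a^v$ are supplied directly by the given automata $\mc{H}$ and $\mc{A}_a^v$.

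For the violation part I would first assemble the building blocks. From \cite{AlmagorBK16}, for $\varphi \in \{\varphi_\local^a, \Psi_\shared\}$ and a threshold $t$ let $\mc{A}_\varphi^{\bowtie t} := \mc{A}_{\varphi, \{u \in \Vals{\varphi} \mid u \bowtie t\}}$ for $\bowtie \in \{=, \geq, <\}$; note $\mc{A}_\varphi^{<t}$ already recognizes the complement of $\mc{A}_\varphi^{\geq t}$, so thresholded value conditions on $\varphi_\local^a$, $\Psi_\shared$ and their negations cost nothing extra. Next I would build a GNBA $\mathsf{Collab}_D$ over $\power{\AP}$ recognizing the $\sigma$ with $\proj{\sigma_\outp}{\outnota} \in \collab{D,\sigma_\inp}_{\nota}$: for each $\distr \in D$ intersect $\mc{A}_{\varphi_\local^a}^{\geq \distr[a]}$ with $\mc{A}_{\Psi_\shared}^{\geq \distr[s]}$, project the alphabet down to $\power{\inp \cup \outnota}$ (the existential projection forgetting $\outp_a$, realizing $\exists \sigma_{\outa}$; trivial for nondeterministic automata), lift back to alphabet $\power{\AP}$ by cylindrification, and take the union over $\distr \in D$, which is correct because $\collab{D,\sigma_\inp}_{\nota} = \bigcup_{\distr \in D}\collab{\{\distr\},\sigma_\inp}_{\nota}$ and by the defining property of $\mc{A}_{\varphi,V}$. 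The single-decomposition variant $\mathsf{Feas}_\distr := \mathsf{Collab}_{\{\distr\}}$ captures exactly the inner existential premise "there is an output sequence $\sigma_{\outp_a}$ with the two value bounds" of \ref{eq:local-ge-1}. Finally, for each $u \in \Vals{\varphi_\local^a}$ I would build $\mathsf{Hope}_u$ over $\power{\AP}$ recognizing $\{\sigma \mid \sigma_\inp \in \Hopeful{u,\varphi_\local^a}\}$ by existentially projecting $\mc{A}_{\varphi_\local^a}^{=u}$ onto $\power{\inp}$ (this is $\Hopeful{u,\varphi_\local^a}$ by definition) and cylindrifying, using that $\varphi_\local^a$ is over $\outp_a \cup \inp$.

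With these in hand, writing $\distr' \sim_a \distr$ when $\distr'[a'] = \distr[a']$ for all $a' \neq a$, the three violation automata are plain Boolean combinations:
\[
\begin{aligned}
\mathsf{V}_2 &:= \mathsf{Collab}_D \cap \overline{\mc{G}_a^v},\\
\mathsf{V}_3 &:= \overline{\mathsf{Collab}_D} \cap \bigcup_{u \in \Vals{\varphi_\local^a}} \bigl(\mathsf{Hope}_u \cap \mc{A}_{\varphi_\local^a}^{<u}\bigr),\\
\mathsf{V}_1 &:= \mathsf{Collab}_D \cap \bigcup_{\distr \in D} \Bigl(\mathsf{Feas}_\distr \cap \bigcap_{\distr' \in D,\ \distr' \sim_a \distr} \bigl(\mc{A}_{\varphi_\local^a}^{<\distr'[a]} \cup \mc{A}_{\Psi_\shared}^{<\distr'[s]}\bigr)\Bigr),
\end{aligned}
\]
where the outer "$\sigma_{\outnota} \in \collab{D,\sigma_\inp}_{\nota}$" premises of \ref{eq:local-ge-1} and \ref{eq:local-ge-2} become $\mathsf{Collab}_D$, the negated premise of \ref{eq:local-ge-3} becomes $\overline{\mathsf{Collab}_D}$, the inner existential premise of \ref{eq:local-ge-1} becomes $\mathsf{Feas}_\distr$, and the finite union/intersection over $D$ (resp.\ the finite disjunction over $u$) mirror the quantifiers $\exists\distr / \forall\distr'$ (resp.\ $\exists u$ in the negation of the universally quantified conclusion of \ref{eq:local-ge-3}), with De Morgan turning the negated conclusion "$\exists\distr' \sim_a \distr$ with both values met" into the displayed conjunction of $\mc{A}^{<\cdot}$ clauses. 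Then $\mc{C}_a^{v,H,D} := \mc{H} \cap \mc{A}_a^v \cap (\mathsf{V}_1 \cup \mathsf{V}_2 \cup \mathsf{V}_3)$, using GNBA closure under product and union. Correctness is obtained by unwinding \Cref{def:contract-sat} clause by clause against the defining property of $\mc{A}_{\varphi,V}$ and the \ltlf\ semantics, checking that each existential projection / cylindrification corresponds exactly to the stated quantifier over the other agents' or the agent's own outputs.

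The only costly step, and hence the main obstacle, is the complement $\overline{\mathsf{Collab}_D}$ in $\mathsf{V}_3$: after existentially projecting out $\sigma_{\outa}$, $\mathsf{Collab}_D$ is genuinely nondeterministic, so this negation forces degeneralizing the GNBA to an NBA and performing B\"uchi complementation, with an exponential blow-up; every other negation is either absorbed into the value-set parameter of $\mc{A}_{\varphi,V}$ or is the negation of a finite quantifier over $D$, and so comes for free. Overall $\mc{C}_a$ is exponential in $|\Phi|$ and in the size of that single complemented automaton, and polynomial in $|\alldec|$, $\max_{(v,H,D) \in \alldec}|D|$, and the sizes of the given contract automata. \qed
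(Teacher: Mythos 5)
Your construction is correct and takes essentially the same route as the paper's own proof: value/threshold GNBA obtained from the construction of \cite{AlmagorBK16} for $\varphi_\local^a$ and $\Psi_\shared$, existential projection over $\outp_a$ to encode collaborativeness, feasibility and hopefulness, a single costly B\"uchi complementation for the non-collaborative branch, and products/unions with $\mc H$, $\mc A_a^v$ and $\overline{\mc G_a^v}$, all unioned over the entries of $\alldec$. The remaining differences are cosmetic and equivalent: the paper enumerates ``bad'' exact value pairs where you intersect $<$-threshold automata, uses $=$-valued rather than $\geq$-valued automata in the feasibility premise, and folds the guarantee violation into the same branch as condition \ref{eq:local-ge-1}, while you keep $\mathsf{Collab}_D$, the assumption and the three violation automata factored separately.
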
 

\emph{Proof Sketch.}
From the formulas $\varphi_\local^a$ and $\Psi_\shared$, 
we construct for
$\sim\;\in\{<,\geq,=\}$,    
$u \in \Vals{\varphi^a_\local}$ and 
$w \in \Vals{\Psi_\shared}$ 
the GNBA
$\mc B_a^{\sim u}$, and $\mc B_s^{\sim w}$ such that 
$\sigma \in \mc L (\mc B_a^{\sim u})$ iff $\llbracket \varphi^a_\local, \sigma \rrbracket \sim u$, and similarly for $\mc B_s^{\sim w}$. 

Next, we construct a GNBA $\mc C_a$ that characterizes the language of traces $\sigma= \sigma_{\inp} \parallel \sigma_{\outp_a}\parallel \sigma_{{\outnota}}$ such that there exists $(v,\hopef,D)\in \alldec$ 
with $\sigma_{\inp} \in \hopef$,  and $\sigma \in A_a^v$,  and some of the conditions \ref{eq:local-ge-1}, \ref{eq:local-ge-2} or \ref{eq:local-ge-3} in \Cref{def:contract-sat} is violated.  
$\mc C_a$  is constructed from  $\mc B_a^{\sim u}, \mc B_s^{\sim w}, \mc A_a^v,  \overline{\mc G_a^v}$ and $\mc H$ using standard automata operations on GNBA.
\qed

Using the automata $\mc C_a$ constructed in \Cref{prop:gnba-correctness},  we can automatically check whether $\mc S \models (\alldec,\agv)$ by checking
for each agent $a \in \Agents$, 
whether the intersection of the languages 
of $\mc C_a$ and the 
implementation $M_a$ is empty.      
If  $\mc L (\mc C_a) \cap \mc L(M_a) = \emptyset$, 
then $M_a$ satisfies conditions of \Cref{def:contract-sat}, otherwise we find a counterexample.  
If for all $a \in \Agents$ we establish 
$\mc L (\mc C_a) \cap \mc L(M_a) = \emptyset$,  we conclude that $\mc S \models (\alldec,\agv)$, and hence $\mc S \models \Phi$.
Otherwise,  we have that $\mc S \not\models (\alldec,\agv)$. 
This, however, does not  imply $\mc S \not\models \Phi$, since the \gedc\ might be insufficient to verify $\Phi$.

\emph{Remark.} 
When checking whether $\mc L (\mc C_a) \cap \mc L(M_a) = \emptyset$,  this can be done separately for each value $v \in Vals(\Phi)$,  and incrementally when constructing $\mc C_a$. 
We further check emptiness on automata intersections in an on-the-fly manner.

The number of states of the  GNBA $\mc C_a$ is at most  
$ 2^{2^{\mc O(|\varphi_\local^a |^2 + | \Psi_\shared |^2)}}  \cdot \sum_{(v,\mc H, D) \in \alldec} |\mc H| \cdot |\mc A_a^v| \cdot | \overline{\mc G_a}^v| .$
Checking contract satisfaction  is done by  checking for each agent the emptiness of the intersection of  $\mc C_a$ and the respective implementation $M_a$,  which can be done in time linear in the size of their product~\cite{BaierKatoen08}.

\begin{theorem}[Checking \gedc\  Satisfaction]\label{thm:complexity-check-sat}
    Checking if a MAS $\mc S = (\inp,  \outp,\langle M_1,\ldots,M_n\rangle)$ 
    satisfies a \gedc\ $(\alldec,\agv)$ for a compositional \ltlf\ specification $\Phi$ %
    can be done for each agent $a \in \Agents$ in time linear in\\
    $|M_a| \cdot 2^{2^{\mc O(|\varphi_\local^a |^2 + | \Psi_\shared |^2)}}  \cdot \sum_{(v,\mc H, D) \in \alldec} |\mc H| \cdot |\mc A_a^v| \cdot | \overline{\mc G_a}^v|$.
\end{theorem}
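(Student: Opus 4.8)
The plan is to assemble the bound from three ingredients already in place: the automata $\mc C_a$ of \Cref{prop:gnba-correctness}, the size estimate stated for them just above the theorem, and the standard linear-time emptiness test for generalized B\"uchi automata. First I would invoke \Cref{prop:gnba-correctness}: for each agent $a \in \Agents$ we have a GNBA $\mc C_a$ over $\power{\AP}$ whose language is exactly the set of traces $\sigma = \sigma_\inp \parallel \sigma_\outp$ for which there is a triple $(v,\mc H,D) \in \alldec$ with $\sigma_\inp \in H$, $\sigma \in A_a^v$, and some condition of \Cref{def:contract-sat} violated for $a$. Hence $M_a$ meets its obligation in \Cref{def:contract-sat} if and only if no trace of $M_a$ lies in $\mc L(\mc C_a)$, i.e.\ $\mc L(\mc C_a) \cap \mc L(M_a) = \emptyset$; and $\mc S \models (\alldec,\agv)$ iff this holds for every $a$. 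So the entire check reduces to $n$ GNBA-emptiness tests, one per agent, and it suffices to bound the cost of a single one.

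Second, I would bound $|\mc C_a|$ by citing the estimate already given before the theorem, unpacking it just enough to justify the shape of the bound. The building blocks $\mc B_a^{\sim u}$ and $\mc B_s^{\sim w}$ come from the construction of \cite{AlmagorBK16}, contributing $2^{\mc O(|\varphi_\local^a|^2)}$ and $2^{\mc O(|\Psi_\shared|^2)}$ states respectively, over only polynomially many thresholds $u \in \Vals{\varphi^a_\local}$, $w \in \Vals{\Psi_\shared}$ (since each value set has size at most $2^{|\varphi|}$). The clauses of condition \ref{eq:local-ge-1} that quantify existentially over an output sequence $\sigma_{\outp_a}$ are handled by projecting that component away (a relabeling, no blow-up) and then, because this appears negatively, complementing; this one complementation is what turns the single exponential into $2^{2^{\mc O(|\varphi_\local^a|^2 + |\Psi_\shared|^2)}}$. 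The remaining Boolean combinations and the products with $\mc H$, $\mc A_a^v$ and $\overline{\mc G_a^v}$, summed over the finitely many triples of $\alldec$, give the factor $\sum_{(v,\mc H,D)\in\alldec} |\mc H| \cdot |\mc A_a^v| \cdot |\overline{\mc G_a^v}|$. The number of acceptance sets of $\mc C_a$ is polynomial in $|\varphi_\local^a| + |\Psi_\shared|$ (at most the total over the component automata), and is absorbed into the $\mc O(\cdot)$.

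Third, I would finish with the emptiness test itself. Viewing the Moore machine $M_a$ as a deterministic automaton accepting exactly $\mc L(M_a)$, the product $\mc C_a \times M_a$ has $|\mc C_a| \cdot |M_a|$ states and inherits the acceptance sets of $\mc C_a$. Nonemptiness of a GNBA amounts to finding a reachable cycle meeting every acceptance set, which is decidable in time linear in the number of states times the number of acceptance sets --- e.g.\ by degeneralizing to an NBA and running a nested depth-first or SCC-based search \cite{BaierKatoen08} --- and since the latter factor is polynomial in $|\varphi_\local^a| + |\Psi_\shared|$ it is swallowed by the double-exponential term. Multiplying through yields the claimed per-agent bound $|M_a| \cdot 2^{2^{\mc O(|\varphi_\local^a|^2 + |\Psi_\shared|^2)}} \cdot \sum_{(v,\mc H,D)\in\alldec}|\mc H|\cdot|\mc A_a^v|\cdot|\overline{\mc G_a^v}|$; the incremental-per-$v$ and on-the-fly remarks affect only constants. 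The one genuinely delicate point --- the main obstacle --- is the second step: ensuring that translating the nested quantifier structure of condition \ref{eq:local-ge-1} into automata requires exactly one complementation, so the blow-up stays a single double exponential rather than a tower; this is precisely what the construction underlying \Cref{prop:gnba-correctness} takes care of, and everything else is bookkeeping.
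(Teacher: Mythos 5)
Your proposal is correct and follows essentially the same route as the paper: reduce contract checking to per-agent emptiness of $\mc L(\mc C_a)\cap\mc L(M_a)$ via \Cref{prop:gnba-correctness}, bound $|\mc C_a|$ by the double-exponential-times-$\sum_{(v,\mc H,D)\in\alldec}|\mc H|\cdot|\mc A_a^v|\cdot|\overline{\mc G_a}^v|$ estimate, and conclude with the linear-time emptiness test on the product with $M_a$, the number of acceptance sets being absorbed. Two cosmetic slips that do not affect the bound: the value sets $\Vals{\varphi_\local^a}$, $\Vals{\Psi_\shared}$ may have exponentially (not polynomially) many elements, which is still swallowed by $2^{\mc O(|\varphi_\local^a|^2+|\Psi_\shared|^2)}$, and in the paper's construction the single complementation responsible for the double exponential arises in $\mc B_D'''$, i.e.\ when expressing that the other agents' outputs are \emph{not} collaborative (the premise of condition \ref{eq:local-ge-3}), whereas the quantifier structure of condition \ref{eq:local-ge-1} is handled without complementation by enumerating the bad value pairs.
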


\subsubsection{Procedure for Checking $(\alldec,\agv)$ is a \gedc\ for $\Phi$}
To check that $(\alldec,\agv)$ is a \gedc\ for $\Phi$, we construct the automaton for $\Phi$, of size at most $2^{(|\Phi|^2)}$. 
The conditions are established by checking  language inclusion and emptiness for GNBA obtained from those for $\Phi$ and $(\alldec,\agv)$.
Both can be done in time exponential in the size of the resulting automata,  resulting in a procedure that 
for each entry $(v,\hopef,\decset) \in \alldec$ runs in time exponential in the size of the automata $\mc H$,  $\mc A_a^v$ and $\overline{\mc G_a}^v$ and double exponential in $|\Phi|^2$.

\subsubsection{Modular Verification}
The contract-based approach allows for modular verification and design.
In a MAS that satisfies a given \gedc, the local modification of an agent's implementation or specification can be analyzed without considering the  full specification or implementation.

Consider a compositional  specification $\Phi$,   
a \gedc\ $(\alldec,\agv)$ for $\Phi$, 
and a MAS $\mc S = (\inp,  \outp,\langle M_1,\ldots,M_n\rangle)$ such that $\mc S \models (\alldec,\agv)$.
Suppose that the local specification for agent $a$ is modified from ${\varphi_\local^a}$ to  $\widehat\varphi_\local^a$, resulting in a revised global specification $\widehat\Phi$.
Let $\widehat M_a$ be a new implementation of agent $a$,  and 
$\widehat{\mc S} = (\inp,  \outp,\langle M_1,\ldots\widehat M_a,\ldots,M_n\rangle)$ the resulting MAS.
In order to check whether $\widehat{\mc S} \models (\alldec,\agv)$,  it suffices to verify only the condition in \Cref{def:contract-sat} pertaining to agent $a$,  with respect to the new implementation $\widehat M_a$ of agent $a$ and the original local specification $\varphi_\local^a$.  
If this is the case,  by \Cref{thm:decomp-soundness},  we can conclude that the modified system $\widehat{\mc S}$ satisfies the original specification $\Phi$. 
Otherwise,  the new implementation $\widehat M_a$ of agent $a$ is incompatible with the given contract.\looseness=-1

Considering the new local specification $\widehat\varphi_\local^a$, if $\widehat M_a$ satisfies the condition in \Cref{def:contract-sat} with respect to $\widehat\varphi_\local^a$,  we can still conclude that $\widehat M_a$ satisfies agent $a$'s obligation towards the remaining agents, but not necessarily $\widehat{\mc S} \models \widehat \Phi$.
The reason is that if $\widehat\varphi_\local^a$ imposes weaker constraints on the coordination between agents than $\varphi_\local^a$,  the old \gedc\ $ (\alldec,\agv)$ might be overly constraining and the overall system sub-optimal. 
This is illustrated in the next example.

\begin{example}\label{ex:relaxes-local-spec}
If the local requirements for an agent are  tightened in the design process, 
the agent's new specification might become incompatible with the existing contract, and thus require revision of the contract and the implementations of the other agents.
If, on the other hand, the designer relaxes the local requirements,  the satisfaction of the existing contract becomes easier. 
However,  the contract might not be a \gedc\ for the new specification $\widehat\Phi$, 
and the existing implementation of the MAS might no longer be optimal with respect to $\widehat\Phi$.
This is because in contrast to traditional requirements, where weaker requirements are always satisfied by a stricter implementation,  here we impose an optimality criterion.
To see this,  suppose that in the setting of \Cref{ex:intro}, the local specification for agent $a_1$ no longer  reduces in value if the vehicle travels in bad weather.
That is,  $\widehat\varphi_\local^1$ is obtained from $\varphi_\local^1$ by dropping the last conjunct.  Regardless of the given \gedc, any implementation for agent $a_1$ that conforms to the original contract clearly still satisfies the conditions of \Cref{def:contract-sat} with respect to $\widehat\varphi_\local^1$.
However,  if the given contract requires agent $a_2$ to travel during bad weather, the current implementation does not good-enough satisfy the new specification $\widehat \Phi$ and the contract is not a \gedc\ for $\widehat \Phi$.
The reason is that ${\widehat\varphi_\local^1}$ allows vehicle $a_1$ to travel in bad weather and still achieve local satisfaction of $1$,  making implementations where vehicle $a_2$ travels in bad weather sub-optimal.
Intuitively,  relaxing the local specification for $a_1$ enables the satisfaction of $\varphi^2_\local$ with a higher value than stipulated by the \gedc\ for $\Phi$, resulting in a higher possible satisfaction value for the new specification $\widehat\Phi$.
As the old contract permits implementations that are not good-enough with respect to $\widehat\Phi$, it needs to be revised.\looseness=-1
\end{example}

\section{Experimental Evaluation and Conclusion}\label{sec:conclusion}

To evaluate our framework,  we implemented the proposed compositional verification procedure in a prototype in Python using the Spot automata library~\cite{duret.22.cav}. 
We ran experiments on 6 MASs each with a number of agents between 3 and 6.
The results,  obtained on a consumer laptop with 16 GB of memory,  are shown in \Cref{tab:experiments}. 
The benchmarks are provided in the supplementary material.

\begin{table}[t] 
    \centering
    \scalebox{0.67}{
     \begin{tabular}{|l | r | r | r | r | r | r|} 
     \hline
     Example & $| \Agents |$ & $|\Phi|$ & $|\mc B_{\Phi} |$ & \multicolumn{3}{c |}{Runtime} \\
      & & & & Comp. & Rev. & Mono. \\
     \hline
     delivery\_vehicles & $3$ & $102$ & $1974$ & $329$ & $24$ & TO \\
     tasks\_collab & $4$ & $124$ & $349$ & $50$ & $4$ & $34$ \\
     tasks\_scaled & $6$ & $136$ & $547$ & $156$ & $13$ & $289$ \\
     robots\_help\_a3 & $3$ & $46$ & $86$ & $1.5$ & $0.2$ & $1.3$ \\
     robots\_help\_a5 & $5$ & $78$ & $470$ & $15.7$ & $3.1$ & $9.1$ \\
     synchr\_response & $3$ & $62$ & $759$ & $439$ & $10$ & TO \\
     \hline
     \end{tabular}
    }
     \caption{
        Runtime for compositional, local revision and monolithic in seconds, with timeout of 30 minutes.
        }\label{tab:experiments}
\end{table}

\subsubsection{Evaluation}
We evaluated the compositional verification (column  ``Comp.'') performed via checking \gedc\ satisfaction (including the verification that it is a \gedc\ for $\Phi$), and compared that to direct monolithic verification of $\Phi$ (column  ``Mono.'') by constructing the product system.
For the more complex examples, only the compositional algorithm finishes within the time limit,
as the automata grow too large, and are limited by memory.
In the compositional case, we need to handle more but smaller automata.
This addresses the memory consumption but involves more operations.
The overhead causes the monolithic approach to be faster when the specifications are small enough.
Overall, the modular approach scales better.
We report on the verification of local specification and implementation revision (column  ``Rev.''), which, as expected,  is faster than full verification.

\subsubsection{Conclusion}
We proposed a framework for modular design and analysis of multi-agent systems,  based on good-enough decomposition contracts.  It offers greater expressivity compared to existing notions of contracts,  and we show that it enhances scalability and modularity of verification.
More generally, our approach opens up a new direction in contract-based design of MAS, by extending it to the setting of quantitative specifications and good-enough satisfaction.

\bibliography{main}

\newpage

\appendix 

\section{Proofs from \Cref{sec:contracts}}
\setcounter{theorem}{0}
\begin{theorem}[Soundness of \gedc]
Let $\Phi$ be a compositional \ltlf\ specification and $(\alldec,\agv)$ be a \gedc\ for $\Phi$.
Then,  for every MAS $\mc S = (\inp,  \outp,\langle M_1,\ldots,M_n\rangle)$ it holds that if $\mc S \models (\alldec,\agv)$,  then $\mc S \models \Phi$.
\end{theorem}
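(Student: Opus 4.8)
\noindent\emph{Proof Plan.}
The plan is to unfold \Cref{def:spec-sat}: fix $v \in \Vals{\Phi}$ and a $v$-hopeful input sequence $\sigma_\inp$, write $\sigma^* := \sigma_\inp \parallel \mc M(\sigma_\inp)$ for the trace induced by $\mc S$, and prove $\sema{\Phi,\sigma^*} \geq v$. First I would invoke the defining property of a \gedc\ (\Cref{def:decomp-contract}) to obtain an entry $(v',\hopef,\decset) \in \alldec$, a value decomposition $\distr \in \decset$, and an output $\sigma_\outp$ with $v' \geq v$, $\sigma_\inp \in \hopef$, $\sema{\Psi_\shared,\sigma_\inp \parallel \sigma_\outp} = \distr[s]$, $\sema{\varphi_\local^a,\sigma_\inp \parallel \sigma_\outp} = \distr[a]$ for every $a$, and $\sigma_\inp \parallel \sigma_\outp \in \bigcap_a A_a^{v'}$. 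By condition~\ref{cond:adec-sat} of \Cref{def:a-decomp} we have $\decset \subseteq \Distri(\Phi,v')$, hence $\comb(\distr) = v'$; so, by monotonicity of $\comb$, it suffices to exhibit some $\distr' \in \decset$ that is dominated componentwise by the tuple $(\sema{\varphi_\local^1,\sigma^*},\ldots,\sema{\varphi_\local^n,\sigma^*},\sema{\Psi_\shared,\sigma^*})$, since then $\sema{\Phi,\sigma^*} \geq \comb(\distr') = v' \geq v$.

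The core of the argument is to locate an agent $a$ such that $\sigma^* \in A_a^{v'}$ and the other agents' actual outputs $\sigma_{\outnota}^* := \proj{\mc M(\sigma_\inp)}{\outnota}$ are $\decset$-collaborative for $\sigma_\inp$. I would argue by contradiction, assuming that for every agent $a$ either $\sigma^* \notin A_a^{v'}$ or $\sigma_{\outnota}^* \notin \collab{\decset,\sigma_\inp}_{\nota}$. To rule out assumption violations, I would use that all $A_a^{v'},G_a^{v'}$ are safety languages, that the \gedc\ witness $\sigma_\inp \parallel \sigma_\outp$ lies in every $A_a^{v'}$, and the receptiveness conditions~\ref{cond:ag1} and~\ref{cond:ag2} of \Cref{def:ag-contract} (which prevent an agent from invalidating its own assumption or another agent's guarantee with its output choices), to run a circular assume-guarantee argument by induction over finite prefixes of $\sigma^*$, concluding $\sigma^* \in \bigcap_a A_a^{v'}$. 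Then every agent must be non-collaborative, so condition~\ref{eq:local-ge-3} of \Cref{def:contract-sat} forces each $a$ to attain the best value of $\varphi_\local^a$ enabled by $\sigma_\inp$; that value is at least $\distr[a]$ because $\sigma_\outp$ witnesses $\sigma_\inp \in \Hopeful{\distr[a],\varphi_\local^a}$. Applying non-collaborativeness of $a$ to its own actual output then forces $\sema{\Psi_\shared,\sigma^*} < \distr[s]$, and since $\Psi_\shared$ is a safety specification this commits a finite prefix of $\sigma^*$ to value below $\distr[s]$. Starting from that prefix and switching the agents' outputs to those of $\sigma_\outp$ one agent at a time, the value of $\Psi_\shared$ must at some switch rise back to at least $\distr[s]$ while the corresponding local value equals $\distr[a]$ (by locality of $\varphi_\local^a$), exhibiting a $\decset$-collaborative completion and contradicting that all agents are non-collaborative.

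With the good agent $a$ in hand, I would apply condition~\ref{eq:local-ge-1} of \Cref{def:contract-sat} (its preconditions $\sigma_\inp \in \hopef$ and $\sigma^* \in A_a^{v'}$ hold) with a target decomposition realizable on top of $\sigma_{\outnota}^*$, supplied by collaborativeness, obtaining $\distr' \in \decset$ that agrees with the target off position $a$ and satisfies $\sema{\varphi_\local^a,\sigma^*} \geq \distr'[a]$ and $\sema{\Psi_\shared,\sigma^*} \geq \distr'[s]$. For the remaining agents $a' \neq a$ I would show $\sema{\varphi_\local^{a'},\sigma^*} \geq \distr'[a']$; this is where the structural restriction that $\varphi_\local^{a'}$ mentions only propositions in $\outp_{a'} \cup \inp$ is essential, since it makes $\sema{\varphi_\local^{a'},\sigma^*}$ depend only on agent $a'$'s own behavior and lets me apply the same reasoning from \Cref{def:contract-sat} to each $a'$ consistently with a single common target decomposition. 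Together these inequalities say the actual values dominate $\distr' \in \decset$ componentwise, and monotonicity of $\comb$ finishes the proof.

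I expect the main obstacle to be the apparent circularity in the key lemma: invoking \Cref{def:contract-sat} for agent $a$ already requires $\sigma^* \in A_a^{v'}$, yet the only leverage for the assumptions comes, via $\bigcap_{a' \neq a} G_{a'}^{v'} \subseteq A_a^{v'}$, from the guarantees the other agents provide, which are themselves conditional on collaborativeness and on their own assumptions. Breaking this loop is precisely what the step-by-step (finite-prefix) assume-guarantee argument does, and it goes through only because every assumption and guarantee language is a safety language and because the receptiveness conditions of \Cref{def:ag-contract} decouple each agent's output from the others' obligations. A secondary subtlety is that condition~\ref{eq:local-ge-1} pins down only the values of $\varphi_\local^a$ and $\Psi_\shared$ for the chosen agent, so some care is needed to reconcile, across all agents simultaneously, which decomposition in $\decset$ ends up being dominated; this is handled by fixing one common target decomposition and exploiting the locality of the $\varphi_\local^{a'}$.
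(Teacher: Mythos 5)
Your overall skeleton does match the paper's proof (obtain a witness trace from \Cref{def:decomp-contract}, rule out assumption violations by a safety-based prefix argument, then case-split on whether some agent sees collaborative outputs, finishing with conditions \ref{eq:local-ge-1}/\ref{eq:local-ge-3} and monotonicity of $\comb$), but the two load-bearing steps are not carried out correctly. First, your ``circular assume-guarantee induction over prefixes of $\sigma^*$'' is missing its essential content. Along $\sigma^*$ itself the other agents' guarantees need not hold at all, because condition \ref{eq:local-ge-2} of \Cref{def:contract-sat} only obliges an agent to meet $G_a^{v'}$ when the others' outputs are $\decset$-collaborative --- which, in your contradiction scenario, is exactly what fails. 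Safety plus the receptiveness conditions \ref{cond:ag1}--\ref{cond:ag2} of \Cref{def:ag-contract} alone cannot break this loop. What the paper actually proves (its ``Property 1'') is that every finite prefix of $\sigma^*$ extends to some \emph{hybrid} infinite trace that lies in all assumptions, is collaborative, and achieves value at least $v'$; these continuations are built from the \gedc\ witness $\sigma_\outp$ by switching in the machines' responses one agent at a time, using conditions \ref{eq:local-ge-1} and \ref{eq:local-ge-2} on those collaborative hybrids, the Moore-machine property (so the switched output agrees with $\sigma^*$ up to the current position), and guarantee receptiveness. Only then does safety of each $A_a^{v'}$ yield $\sigma^* \in \bigcap_a A_a^{v'}$. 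Your plan asserts this conclusion but does not supply the construction that makes the induction step possible.

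Second, your contradiction in the all-non-collaborative case is broken. Once you have fixed a finite prefix of $\sigma^*$ that is doomed for $\Psi_\shared$ (every continuation has value $< \distr[s]$), no switching of outputs in the suffix can make the value ``rise back to at least $\distr[s]$'' --- that is precisely what doomed means. If instead you switch outputs from time zero, then at the crossing point all but the currently switched agent carry a mixture of actual and witness outputs, so the trace says nothing about membership of the \emph{actual} outputs $\proj{\mc M(\sigma_\inp)}{\outnota}$ in $\collab{\decset,\sigma_\inp}_{\nota}$ (collaborativeness quantifies over completions by a single agent against the others' actual behavior), so you do not contradict non-collaborativeness. The paper's contradiction is value-based rather than collaborativeness-based: the prefix-continuation property gives a continuation of the doomed prefix with $\Phi$-value $\geq v'$ whose $\Psi_\shared$-value is at most that of $\sigma^*$; monotonicity of $\comb$ then forces some local value to strictly exceed its value on $\sigma^*$ over the same input $\sigma_\inp$, contradicting condition \ref{eq:local-ge-3} (which, as you correctly note, pins $\sigma^*$ to the best local value enabled by $\sigma_\inp$). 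Repairing your argument essentially requires importing that property, i.e., the paper's Property 1, so as written the proposal does not yet constitute a proof.
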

\begin{proof}
    Let $\mc S = (\inp,  \outp,\mc M)$ be a MAS that satisfies a \gedc\ with the conditions of \Cref{def:contract-sat}.
    Let  $\widetilde\sigma_\inp \in (\power{\inp})^\omega$  be an input trace
    and let $v \in \Vals{\Phi}$ be a value such that 
    there exists $\widetilde \sigma_{\outp} \in  (\power{\outp})^\omega$ with 
    $\sema{\Phi,\widetilde\sigma_\inp \parallel\widetilde \sigma_\outp} = v$.
    Let us denote 
    $\begin{array}{lll}
    \widetilde u_a & := &  \sema{\varphi_\local^a, \widetilde\sigma_\inp \parallel \widetilde\sigma_\outp}$ for all $a \in \{1,\ldots, n\}\\
    \widetilde  w  &:=& \sema{\Psi_\shared, \widetilde\sigma_\inp \parallel \widetilde\sigma_\outp}
    \end{array}$.
    
    Furthermore,  let $\widetilde  \sigma := \widetilde\sigma_\inp \parallel \widetilde\sigma_\outp$.

   Let us denote with $\sigma'_\outp := (\parallel_{a=1}^n M_a)(\widetilde\sigma_\inp)$ the output trace produced by $\mc S$ on $\widetilde \sigma_\inp$, 
    and let $\sigma' := \widetilde\sigma_\inp \parallel \sigma'_\outp$.  
    
    To show that $\mc S\models \Phi$,  we need to show 
    $\sema{\Phi,\sigma'} \geq v$. 
    
    	By  the conditions in \Cref{def:decomp-contract} and the fact that $\widetilde\sigma_\inp \in \Hopeful{v, \Phi}$,  we can assume that we have chosen $\widetilde \sigma_\outp$ such that there exist
    	$(v',\hopef,\decset) \in \alldec$ and $\widetilde\distr \in \decset$ where
\begin{itemize}
	 \item $v' \geq v, \widetilde \sigma_\inp \in H$,  
     $\sema{\Psi_\shared,\widetilde\sigma} \geq \widetilde \distr[s]$, 
	\item  $\sema{\varphi_\local^a,\widetilde\sigma} \geq \widetilde \distr[a]$ for all  $a\in [n]$,  
	\item  $\widetilde\sigma \in \bigcap_{a \in \{1,\ldots, n\}}A_a$,  and
	\item $\proj{\widetilde \sigma}{\outnota} \in \collab{\decset,\widetilde \sigma_\inp}_{\nota}$ for all $a \in[n]$. 
 \end{itemize}

   We first establish a helpful property that states that  every prefix of $\sigma'$ has an infinite continuation for which the value of $\Phi$ is at least $v'$. 
    This is formalized in the following.
    
    \smallskip
    
    {\bf Property 1:}
For every $i \in \nats$,  for the prefix $\sigma'[0,i)$ of $\sigma'$ there exist an infinite continuation 
    $\sigma''_i$ such that
    \begin{itemize}
    \item $\sema{\Phi,(\sigma'[0,i))\cdot \sigma''_i} \geq v'$, 
    \item  $(\sigma' [0,i))\cdot \sigma''_i \in \bigcap_{a \in \{1,\ldots, n\}}A_a$ and
    \item $\proj{(\sigma'[0,i)\cdot \sigma''_i) }{\outnota} \in \collab{\decset,\widetilde \sigma_\inp}_{\nota}$ for all $a \in[n]$.
    \end{itemize}
    
    \noindent
    {\bf Proof of Property 1.} The proof is by induction on $i$.

	\smallskip
	\noindent
    {\it Base case $i=0$.}
    $\sigma'[0,i)$ is the empty sequence.
   Thus, $\sigma''_0 := \widetilde\sigma$ will have the desired properties by the choice of $\widetilde\sigma$.
	
	\smallskip
    \noindent
	{\it Induction step.} 
    Suppose that the claim holds for some $i \in \nats$,  and let us show that it is satisfied by $i+1$ as well.
    Let us order the agents $1,\ldots,n$.
    We will construct $\sigma''_{i+1}$ in $n$ steps in which we address the agents in the above order.
    
    At each step $a \in \{1,\ldots n\}$ we will use the conditions of \Cref{def:contract-sat} applied to agent $a$, 
    together with the fact that  $(\alldec,\agv)$ satisfies the properties of \Cref{def:decomp-contract}. 
    For the corresponding agent $a$ we will construct an intermediate trace $\sigma''_{i,a}$ where:
    \begin{itemize}
    \item $\sema{\Phi,(\sigma'[0,i))\cdot \sigma''_{i,a}} \geq  v'$,
    \item $(\sigma'[0,i))\cdot \sigma''_{i,a} \in \bigcap_{a \in \{1,\ldots, n\}}A_a$,
        \item $\proj{(\sigma'[0,i))\cdot \sigma''_{i,a}}{\outp_{\overline k}} \in \collab{\decset,\widetilde \sigma_\inp}_{\outp_{\overline k}}$ for $k \in [n]$.
    \item $\proj{\sigma''_{i,a}}{\outp_k}[0,i) = M_k((\widetilde\sigma_\inp \parallel \sigma'_{\outp_{\overline{k}}})[0,i))$ for all agents $k \leq a$.
    \end{itemize} 
    Once we have constructed $\sigma''_{i,a}$ for $a=n$,  we will define $\sigma''_{i+1} := \sigma''_{i,n}[1,\infty)$, which will satisfy the desired property for index $i+1$ by construction.

    We now describe the construction of $\sigma''_{i,a}$. 
    For convenience,  let us define $\sigma''_{i,0}  := \widetilde\sigma[i,\infty)$.
    For $a = 1,\ldots, n$ we apply conditions \ref{eq:local-ge-1} and \ref{eq:local-ge-2} from \Cref{def:contract-sat}  by instantiating:
    \begin{itemize}
    \item  $\sigma_\inp$ with $\widetilde\sigma_\inp$,
    \item  $v$ with the value $v'$,
    \item  $\sigma_\outp$ with $\widetilde\sigma_\outp$,
    \item $\distr$ with the value decomposition $\widetilde \distr$,
    \item $\sigma_{\outnota}$ with $\proj{(\sigma'[0,i)\cdot\sigma''_{i,a-1})}{\outnota}$,
    \item $\sigma_{\outp_a}$ with $\proj{( \sigma'[0,i)\cdot \sigma''_{i,a-1})}{\outp_a}$.
    \end{itemize}
    Thus,  by condition~\ref{eq:local-ge-1} in \Cref{def:contract-sat} we have that 
    \begin{align*}\sema{\Phi,
    \widetilde\sigma_\inp & \parallel 
    M_a\big(\widetilde\sigma_\inp \parallel \proj{(\sigma'[0,i)\cdot \sigma''_{i,a-1})}{\outnota})\big)\\&
     \parallel \proj{(\sigma'[0,i)\cdot \sigma''_{i,a-1})}{\outnota}} \geq  v'.
    \end{align*}
    Now,  let 
    \begin{align*}\sigma''_{i,a} := 
    \big(\widetilde\sigma_\inp & \parallel 
    M_a\big(\widetilde\sigma_\inp \parallel \proj{(\sigma'[0,i)\cdot \sigma''_{i,a-1})}{\outnota})\big)\\&
     \parallel \proj{(\sigma'[0,i)\cdot \sigma''_{i,a-1})}{\outnota}\big)[i,\infty).
    \end{align*}
    Since $M_a$ is an implementation of an agent represented  as a Moore machine,  we have that 
    \[M_a\big(\widetilde\sigma_\inp \parallel \proj{(\sigma'[0,i)\cdot \sigma''_{i,a-1})}{\outnota})\big)[0,i+1)  = 
    \sigma'[0,i+1).
    \]
    Therefore,  the above implies that 
    $\sema{\Phi,((\widetilde\sigma_\inp \parallel \sigma'_\outp)[0,i))\cdot \sigma''_{i,a}} \geq  v'$.
    Furthermore,  we have that $\proj{\sigma''_{i,a}}{\outp_a}[0,i) = M_a((\widetilde\sigma_\inp \parallel \sigma'_{\outnota})[0,i))$.
    
    What remains is to show that the trace defined in this way satisfies the assumptions.
    Since $M_a$ satisfies condition \ref{eq:local-ge-2} in \Cref{def:contract-sat}  we have that 
    $\sigma'[0,i)\cdot \sigma''_{i,a} \in G_a$.
    For $k \neq a$ we have $\sigma'[0,i)\cdot \sigma''_{i,a-1} \in G_k$.  
    Thus,  by the definition of guarantees, we can ensure that by potentially modifying the trace $\proj{\sigma'[0,i)\cdot \sigma''_{i,a}}{\outnota}$ we have 
    that $\sigma'[0,i)\cdot \sigma''_{i,a} \in G_k$ (since the difference between $\proj{\sigma''_{i,a}}{\outp_a}$ and $\proj{\sigma''_{i,a-1}}{\outp_a}$ happens only from position $i$ and later, 
    we can ensure that this does not affect the other desired properties of $\sigma''_{i,a}$).
    
    Therefore,  we have constructed $\sigma''_{i,a}$ with the desired properties.
This concludes the proof of {\bf Property 1} by induction.
 
\bigskip

    If $\Phi$ were a safety \ltlf\ specification,  {\bf Property 1} would be sufficient to establish the claim of the theorem. 
    Since this is not the case, we need an additional property pertaining to the local specifications (which are not necessarily safety properties).
    We use the additional condition \ref{eq:local-ge-3} in \Cref{def:contract-sat} to establish this. 
    It defines the requirements on an implementation for $a$ if $\proj{\sigma'_{\outp}}{\outnota} \not\in \collab{\decset,\widetilde\sigma_\inp}_{\nota}$.

\smallskip

We distinguish two cases.

    \smallskip
    \noindent
    {\bf Case 1:} For some agent $a$, the outputs of the other agents are $\decset$-collaborative for $\widetilde \sigma_\inp$ and satisfy the assumption $A_a$, that is,  $\sigma'_{\outnota} \in \collab{\decset,\widetilde\sigma_\inp}_{\nota}$ and $\sigma' \in A_a$.  
    In such a case, we can apply condition \ref{eq:local-ge-1} in \Cref{def:contract-sat} to establish that $\sema{\Phi, \sigma'} = v' \geq v$,  which concludes the proof in this case.
   
    \smallskip
    \noindent
    {\bf Case 2:} For every $a$, the outputs of the other agents are either not $\decset$-collaborative for $\widetilde \sigma_\inp$ or violate the assumption $A_a$.  
    Since $A_a$ is a safety language, and by {\bf Property 1} every prefix of $\sigma'$ has a  continuation that satisfies all the assumptions, the second case is not possible. 
    Thus,  for every $a$ it holds that $\sigma'_{\outnota} \not\in \collab{\decset,\widetilde\sigma_\inp}_{\nota}$.  
    Therefore,  by condition \ref{eq:local-ge-3} in \Cref{def:contract-sat} we have that for each $a$,  
    the trace $\sigma'$ achieves the maximal possible value for $\varphi_\local^a$ that is possible for $\widetilde \sigma_\inp$. 
    Therefore,  for every $a$ we have $\sema{\varphi_\local^a, \sigma'} \geq \sema{\varphi_\local^a, \widetilde\sigma} = \widetilde u_a$.
 
 Assume for the sake of contradiction that $\sema{\Phi,\sigma'} < v$. This means that $\sema{\Psi_\shared,\sigma'} < w$. 
 Since $\Psi_\shared$ is a safety property, there exists an index $i$, such that for every continuation of $\sigma'[0,i)$ the value of $\Psi_\shared$ is less than $w$.  
 Further, since $\Vals{\Psi_\shared}$ is finite, we can choose $i$ such that 
 for every continuation of $\sigma'[0,i)$ the value of $\Psi_\shared$ is less than or equal to
 $\sema{\Psi_\shared,\sigma'}$.
 By {\bf Property 1}, there exists a continuation $\sigma_i''$ where the value of $\Phi$ is at least $v' \geq v$.  
 By the above, for this continuation, we also have that the value of $\Psi_\shared$ is less than $w$ and  less than or equal to
 $\sema{\Psi_\shared,\sigma'}$.
Thus,  we have 

$\begin{array}{lll}
\sema{\Psi_\shared,\sigma'[0,i)\cdot\sigma_i''} \leq \sema{\Psi_\shared,\sigma'} & < & w\\
\sema{\Phi,\sigma'} & < & v\\
\sema{\Phi,\sigma'[0,i)\cdot\sigma_i''} & \geq & v.
\end{array}
 $
 
 Thus,  by the monotonicity property of $\comb$, we have that 
 for some $a$ it holds that   $\sema{\varphi_\local^a, \sigma'[0,i).\sigma''} > \sema{\varphi_\local^a, \sigma'}$,  which leads to a contradiction, since $\mc S $ is implemented via Moore machines. 
 Therefore, $\mc S$ can produce a different output only in the step after that, as input sequences are different.
\end{proof}

\begin{theorem}
Let $\Phi$ be a compositional \ltlf\ specification over atomic propositions  $\AP = \inp \uplus \outp$.
If $\mc S = (\inp,  \outp,\langle M_1,\ldots,M_n\rangle)$ is a multi-agent system such that $\mc S \models \Phi$, 
then there exists a good-enough decomposition contract $(\alldec,\agv)$  for $\Phi$ such that  $\mc S \models (\alldec,\agv)$.
\end{theorem}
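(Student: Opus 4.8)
The plan is to build a \gedc\ whose guarantees are the agents' exact implementations and whose allowed decompositions record precisely the value decompositions that $\mc S$ produces. Since $\mc S$ is a composition of Moore machines, the output $\mc M(\sigma_\inp)$ is uniquely determined by each input trace $\sigma_\inp \in (\power{\inp})^\omega$; write $\tau_{\sigma_\inp} := \sigma_\inp \parallel \mc M(\sigma_\inp)$, set $v_{\sigma_\inp} := \sema{\Phi, \tau_{\sigma_\inp}}$, and let $\distr_{\sigma_\inp}$ be the decomposition with $\distr_{\sigma_\inp}[a] = \sema{\varphi_\local^a, \tau_{\sigma_\inp}}$ for every $a$ and $\distr_{\sigma_\inp}[s] = \sema{\Psi_\shared, \tau_{\sigma_\inp}}$; then $\comb(\distr_{\sigma_\inp}) = v_{\sigma_\inp}$, so $\distr_{\sigma_\inp} \in \Distri(\Phi, v_{\sigma_\inp})$. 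For $v \in \Vals{\Phi}$ and $\distr \in \Distri(\Phi, v)$, put $H_{v,\distr} := \{\sigma_\inp \mid v_{\sigma_\inp} = v \text{ and } \distr_{\sigma_\inp} = \distr\}$ and define $\alldec := \{(v, H_{v,\distr}, \{\distr\}) \mid v \in \Vals{\Phi},\ \distr \in \Distri(\Phi, v),\ H_{v,\distr} \neq \emptyset\}$, which is finite since $\Vals{\Phi}$ and each $\Distri(\Phi, v)$ are finite. For $\agv$, I would set, for every $v \in \Vals{\Phi}$ and every agent $a$, $G_a^v := \mc L(M_a)$ and $A_a^v := \bigcap_{a' \neq a} \mc L(M_{a'})$; intuitively $\agv$ prescribes the exact implementation of every agent.

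First I would verify that $(\alldec, \agv)$ is a \gedc. Checking that $\alldec$ is an allowed decompositions map is direct: input coverage holds because every $\sigma_\inp \in \Hopeful{v, \Phi}$ satisfies $v_{\sigma_\inp} \geq v$ (as $\mc S \models \Phi$) and lies in $H_{v_{\sigma_\inp}, \distr_{\sigma_\inp}}$, while each entry's decomposition lies in $\Distri(\Phi, v)$ and is witnessed by $\tau_{\sigma_\inp}$ for any $\sigma_\inp \in H_{v,\distr}$. Each $\agv(v)$ is an assume-guarantee contract: every $\mc L(M_a)$, and hence every $A_a^v$, is a safety language, since any mismatch between an output symbol and the one the machine produces is already witnessed by a finite prefix; $\bigcap_{a' \neq a} G_{a'}^v = A_a^v$ by definition; and the closure conditions \ref{cond:ag1} and \ref{cond:ag2} of \Cref{def:ag-contract} follow from the Moore property, as the symbol a machine outputs at a position depends only on strictly earlier inputs, so a prefix with a legal continuation can be re-extended after overwriting, at the next position, the outputs that the respective condition lets us pick, and then running the relevant machines forward. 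Finally, the compatibility conditions of \Cref{def:decomp-contract} hold by choosing, for $\sigma_\inp \in \Hopeful{v, \Phi}$, the entry for $(v_{\sigma_\inp}, \distr_{\sigma_\inp})$, the decomposition $\distr_{\sigma_\inp}$, and the output $\mc M(\sigma_\inp)$: by construction $\tau_{\sigma_\inp}$ realizes $\distr_{\sigma_\inp}$ and lies in $\bigcap_{a'} \mc L(M_{a'}) \subseteq A_a^{v_{\sigma_\inp}}$ for every $a$.

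Next I would prove $\mc S \models (\alldec, \agv)$ by checking the condition of \Cref{def:contract-sat} for each agent $a$. Fix $a$, an input trace $\sigma_\inp$, a trace $\sigma_{\outnota}$ of the other agents, and an entry $(v, H, D) \in \alldec$ with $\sigma_\inp \in H$ and $\sigma := \sigma_\inp \parallel M_a(\sigma_\inp \parallel \sigma_{\outnota}) \parallel \sigma_{\outnota} \in A_a^v$. The crux is that these hypotheses force $\sigma = \tau_{\sigma_\inp}$: the trace $\sigma$ is consistent with $M_a$ by definition and with every $M_{a'}$, $a' \neq a$, because $\sigma \in A_a^v$, so the output trace $M_a(\sigma_\inp \parallel \sigma_{\outnota}) \parallel \sigma_{\outnota}$ solves the fixpoint equation defining $\mc M(\sigma_\inp)$ and, by uniqueness of the Moore composition, equals it. Since $\sigma_\inp \in H$, the entry is $(v_{\sigma_\inp}, H_{v_{\sigma_\inp}, \distr_{\sigma_\inp}}, \{\distr_{\sigma_\inp}\})$ and $\sigma = \tau_{\sigma_\inp}$ realizes the unique decomposition $\distr_{\sigma_\inp} \in D$. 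Hence \ref{eq:local-ge-2} holds because $\sigma \in \mc L(M_a) = G_a^v$; \ref{eq:local-ge-1} holds with $\distr' := \distr_{\sigma_\inp}$, since $\sema{\varphi_\local^a, \sigma} = \distr_{\sigma_\inp}[a]$ and $\sema{\Psi_\shared, \sigma} = \distr_{\sigma_\inp}[s]$; and, because $\sigma$ realizes $\distr_{\sigma_\inp} \in D$, the sequence $\sigma_{\outnota}$ lies in $\collab{D, \sigma_\inp}_{\nota}$, so the hypothesis of \ref{eq:local-ge-3} never holds and that condition is vacuous.

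I expect this last observation to be the main obstacle. Condition \ref{eq:local-ge-3} demands that $M_a$ maximize the value of $\varphi_\local^a$ on non-collaborative behaviors of the other agents, yet the agent $a$ inside $\mc S$ may deliberately lower $\sema{\varphi_\local^a, \cdot}$ in order to cooperate on $\Psi_\shared$, so this obligation is not satisfied for arbitrary choices of the other agents. The resolution is the choice $A_a^v = \bigcap_{a' \neq a} \mc L(M_{a'})$: letting the assumption pin down the precise implementations of the other agents, combined with determinism of the Moore composition, ensures that the only behavior of $\overline a$ that ever satisfies $a$'s assumption is the one occurring in $\mc S$ itself, which is collaborative by construction. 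A secondary point is to confirm that such strong assumptions and guarantees still satisfy the structural conditions of \Cref{def:ag-contract}, which again rests on the agents being Moore, not Mealy, machines; the remaining bookkeeping for the allowed-decompositions-map and compatibility conditions is routine.
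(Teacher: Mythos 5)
Your construction is essentially the paper's own proof: you take $\alldec$ to consist of exactly the value decompositions realized by $\mc S$, with $H$ the inputs realizing each one, and set $G_a^v$ to the language of $M_a$ and $A_a^v$ to the intersection of the other agents' guarantees, so the contract pins down the exact composed implementation. Your additional details (the Moore-machine argument for the closure conditions of \Cref{def:ag-contract}, and the observation that the assumption forces $\sigma_{\outnota}$ to coincide with the other machines' outputs, making condition \ref{eq:local-ge-3} vacuous) are correct elaborations of steps the paper leaves implicit.
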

\begin{proof}
    Let $\mc S = (\inp,  \outp,\mc M)$ be a multi-agent system that satisfies \Cref{def:spec-sat} for $\Phi$.
    For any input trace $\widetilde\sigma_\inp \in (\power{\inp})^\omega$, $\mc S$ produces an output $(\parallel_{a=1}^n M_a)(\widetilde\sigma_\inp) = \widetilde\sigma_\outp \in (\power{\outp})^\omega$ with some value $\sema{\Phi, (\widetilde\sigma_\inp \parallel \widetilde\sigma_\outp)} = \widetilde{v}$  for $\Phi$.
    The trace $(\widetilde\sigma_\inp \parallel \widetilde\sigma_\outp)$ realizes a specific decomposition $\widetilde\distr$ with $\comb(\widetilde\distr) = \widetilde{v}$.
    That is,  $\widetilde\distr$ is such that  for every $a \in [n]$, 
$\widetilde\distr[a] = \sema{\varphi_\local^a, (\sigma_\inp \parallel \widetilde\sigma_\outp)}$, and $\widetilde\distr[s] = \sema{\Psi_\shared, (\sigma_\inp \parallel \widetilde\sigma_\outp)}$.

As the number of possible values of $\Phi$ and the number of possible decompositions is finite,  there is a finite set of decompositions that are used by $\mc S$ for some input sequence.  
    Therefore we can construct $\alldec$ by adding for each value decomposition $\distr$ realized by $\mc S$ for some input sequence the triple  $(\comb(\distr), \hopef_{\distr}, \{\distr\})$, where
   $\hopef_{\distr} = \{ \sigma_\inp \in (\power{\inp})^\omega \mid \sema{( \langle\varphi_\local^a\rangle_{a=1}^n, \Psi_\shared),((\parallel_{a=1}^n M_a)(\sigma_\inp) \parallel \sigma_\inp)} \cong \distr\}$.

    For $\agv$ we define each $G_a^v$ as the safety language which encodes the implementation $M_a$ of an agent $a$,
    while the assumption $A_a^v$ is the conjunction $\bigwedge_{a' \neq a} {G}^v_{a'}$.
    These are the same for all $v$, as they must be sufficient for all input sequences and exactly the allowed decompositions in $\alldec$.
    $\agv$ therefore precisely captures the composed strategy $(\parallel_{a=1}^n M_a)$.

    Now, any input trace $\widetilde\sigma_\inp$ will be in exactly one language $\hopef_{\widetilde\distr}$ for exactly one decomposition $\widetilde\distr$.
    And this is precisely the one obtained by $(\widetilde\sigma_\inp \parallel \widetilde\sigma_\outp)$, where $\widetilde\sigma_\outp = (\parallel_{a=1}^n M_a)(\widetilde\sigma_\inp)$.
    We combine these to arrive at a good-enough decomposition contract $(\alldec,\agv)$, which satisfies its conditions~(\Cref{def:decomp-contract}), 
    and thus $\mathcal S$ satisfies the contract according to \Cref{def:contract-sat}.
\end{proof}

\newpage
\section{Details of Verification Methods from \Cref{sec:algo}}
\subsection*{Automata Constructions}

\subsubsection{Automata Representation of Decomposition Contracts}
We consider assumptions and guarantees represented by automata.
For each  $v \in \Vals{\Phi}$ we suppose that the  assume-guarantee contract $\agv(v) = \langle (A_a^v, G_a^v)\rangle_{a=1}^n$ is represented as follows. 
The assumption languages $A_a^v$, and the complement languages ${(\power{\AP})}^\omega\setminus G_a^v$ of the guarantees are given respectively as the GNBA with alphabet $\power{\AP}$:
$\mc A_a^v$  is the assumption of agent $a$ for value $v$.
$\overline{\mc G_a^v}$ is the complement of the guarantee of agent $a$ for value $v$.

Furthermore,  we assume that for each $(v,H,D)\in \alldec$,  the language $H$ is represented by an GNBA $\mc H$ where $\mc L(\mc H) = \{\sigma_\inp \parallel \sigma_\outp \in (\power{\AP})^\omega \mid \sigma_\inp \in H \}$.  
We will write $(v,\mc H,D)$ instead of $(v,H,D)$ for the elements of $\alldec$.  

\subsubsection{Automata for the Specifications}
For each of the \ltlf\  formulas $\varphi_\local^a$, $\Psi_\shared$ and $\Phi$,  we can apply the construction from \cite{AlmagorBK16} to construct the GNBA
$\mc B_a  :=  \mc A_{\varphi_\local^a,\Vals{\varphi_\local^a}}$,
$\mc B_s  :=  \mc A_{\Psi_\shared,\Vals{\Psi_\shared}}$ and
$\mc B_\Phi  :=  \mc A_{\Phi,\Vals{\Phi}}$.
Given  $\sim\;\in\{<,\geq,=\}$,    
$u \in \Vals{\varphi^a_\local}$,  
$w \in \Vals{\Psi_\shared}$ and
$v \in \Vals{\Phi}$, 
we  obtain automata  
$\mc B_a^{\sim u}$,  $\mc B_s^{\sim w}$, $\mc B_\Phi^{\sim v}$ with the following languages:
$\begin{array}{lll}
\mc L(\mc B_a^{\sim u}) & = & \{ \sigma \in (\power{\AP})^\omega \mid \sema{\varphi_\local^a,\sigma}  \sim u\}, \\
\mc L(\mc B_s^{\sim w} ) & = & \{ \sigma \in (\power{\AP})^\omega \mid \sema{\Psi_\shared,\sigma} \sim w \},\\
\mc L(\mc B_\Phi^{\sim v} ) & = & \{ \sigma \in (\power{\AP})^\omega \mid \sema{\Phi,\sigma} \sim v \}.
\end{array}$

\begin{proposition}\label{prop:gnba-correctness-appx}
For each agent $a \in \Agents$,  we can construct a GNBA $\mc C_a$ such that  for any $\sigma = \sigma_\inp \parallel \sigma_\outp \in (\power{\AP})^\omega$  we have that
$\sigma \in  \mc L(\mc C_a)$ if and only if there exists  $(v, \hopef, D) \in \alldec$ such that 
$\sigma_{\inp} \in H$,  and $\sigma \in A_a^v$, and at least one of the conditions in \Cref{def:contract-sat} is violated for agent $a$.
\end{proposition}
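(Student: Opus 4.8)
\emph{Proof proposal.}
The plan is to construct $\mc C_a$ as a finite union, ranging over the finitely many triples $(v,\mc H,D)\in\alldec$, of GNBA recognising exactly the traces $\sigma=\sigma_\inp\parallel\sigma_\outp$ that are \emph{covered} by $(v,\mc H,D)$ --- i.e.\ with $\sigma_\inp\in H$ (equivalently $\sigma\in\mc L(\mc H)$) and $\sigma\in A_a^v$ --- and that, in addition, witness the failure of at least one of the conditions \ref{eq:local-ge-1}, \ref{eq:local-ge-2}, \ref{eq:local-ge-3} of \Cref{def:contract-sat} when these are read on $\sigma$ itself with $\proj{\sigma}{\outp_a}$ playing the role of $M_a(\sigma_\inp\parallel\sigma_{\outnota})$; this is the reading that makes $\mc L(\mc C_a)\cap\mc L(M_a)=\emptyset$ equivalent to $M_a$ meeting its obligation. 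The coupling of the value $v$ (used for $\mc A_a^v$ and $\overline{\mc G_a^v}$) with the decomposition set $D$ (used for the $\collab$ predicate and for the decompositions appearing in \ref{eq:local-ge-1}) is handled by iterating over the triples of $\alldec$. Since $D$, $\Vals{\varphi_\local^a}$ and $\Vals{\Psi_\shared}$ are finite, each of the three negations is a Boolean combination --- finite unions, finite intersections, and a single complementation --- of a handful of basic GNBA, augmented with existential projections over $\outp_a$ to discharge the quantifiers ``there is an output sequence $\sigma_{\outp_a}$'' occurring in these conditions.

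The building blocks are the contract automata $\mc H$, $\mc A_a^v$, $\overline{\mc G_a^v}$ and the specification automata $\mc B_a^{\sim u}$, $\mc B_s^{\sim w}$ for the thresholds $u\in\Vals{\varphi_\local^a}$, $w\in\Vals{\Psi_\shared}$ and ${\sim}\in\{<,\geq,=\}$. For a decomposition $\distr$, set $\mc E_\distr:=\mc B_a^{\geq\distr[a]}\cap\mc B_s^{\geq\distr[s]}$, so $\mc L(\mc E_\distr)=\{\sigma\mid\sema{\varphi_\local^a,\sigma}\geq\distr[a]\text{ and }\sema{\Psi_\shared,\sigma}\geq\distr[s]\}$. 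The quantifier ``there is a $\sigma_{\outp_a}$ realising $\distr$'' is then an existential projection of $\mc E_\distr$ onto the alphabet $\power{\inp\cup\outnota}$ followed by a cylindrification back to $\power{\AP}$; call the result $\mc E_\distr^{\exists}$. Its language is exactly the set of $\sigma$ such that some labelling of $\outp_a$, combined with $\proj{\sigma}{\inp}$ and $\proj{\sigma}{\outnota}$, achieves at least $\distr[a]$ for $\varphi_\local^a$ and at least $\distr[s]$ for $\Psi_\shared$; crucially $\mc L(\mc E_\distr^{\exists})$ does not constrain $\proj{\sigma}{\outp_a}$. Hence $\bigcup_{\distr\in D}\mc L(\mc E_\distr^{\exists})$ is precisely the set of $\sigma$ with $\proj{\sigma}{\outnota}\in\collab{D,\proj{\sigma}{\inp}}_{\nota}$. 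In the same way, existentially projecting $\mc B_a^{=u}$ onto $\power{\inp}$ and cylindrifying back to $\power{\AP}$ yields an automaton $\mc P_u$ with $\mc L(\mc P_u)=\{\sigma\mid\proj{\sigma}{\inp}\in\Hopeful{u,\varphi_\local^a}\}$, using that $\varphi_\local^a$ mentions only $\inp\cup\outp_a$, so the $\outnota$-part of a trace is irrelevant to $\Hopeful{u,\varphi_\local^a}$.

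With $D_\distr:=\{\distr'\in D\mid\distr'[a']=\distr[a']\text{ for all }a'\neq a\}$ (non-empty, as $\distr\in D_\distr$), I would let $N_1$ be the language
\[
\bigcup_{\distr\in D}\Bigl(\mc L(\mc E_\distr^{\exists})\,\cap\,\bigcap_{\distr'\in D_\distr}\bigl(\mc L(\mc B_a^{<\distr'[a]})\cup\mc L(\mc B_s^{<\distr'[s]})\bigr)\Bigr),
\]
where membership in $\mc L(\mc E_\distr^{\exists})$ encodes the hypothesis of \ref{eq:local-ge-1} for $\distr$ (which also makes $\sigma$ collaborative) and the inner intersection says the actual output realises no $\distr'\in D_\distr$; let $N_2:=\bigl(\bigcup_{\distr\in D}\mc L(\mc E_\distr^{\exists})\bigr)\cap\mc L(\overline{\mc G_a^v})$ (collaborative but violating the guarantee); and let $N_3$ be the complement of $\bigcup_{\distr\in D}\mc L(\mc E_\distr^{\exists})$ intersected with $\bigcup_{u\in\Vals{\varphi_\local^a}}\bigl(\mc L(\mc P_u)\cap\mc L(\mc B_a^{<u})\bigr)$ (non-collaborative while missing an achievable value). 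Here $N_1,N_2,N_3$ all depend on $(v,\mc H,D)$. Finally set
\[
\mc C_a:=\bigcup_{(v,\mc H,D)\in\alldec}\Bigl(\mc L(\mc H)\cap\mc L(\mc A_a^v)\cap\bigl(N_1\cup N_2\cup N_3\bigr)\Bigr),
\]
and realise $\mc C_a$ as a GNBA using the standard closure constructions for GNBA (union, intersection, existential projection, cylindrification, complementation); the body of the proof simply unwinds these against \Cref{def:contract-sat}, checking that $N_1$ (resp.\ $N_2$, $N_3$) captures exactly the negation of condition \ref{eq:local-ge-1} (resp.\ \ref{eq:local-ge-2}, \ref{eq:local-ge-3}) for agent $a$ on $\sigma$ with $\proj{\sigma}{\outp_a}$ substituted for $M_a(\sigma_\inp\parallel\sigma_{\outnota})$, and that the prefix $\mc L(\mc H)\cap\mc L(\mc A_a^v)$ imposes precisely $\sigma_\inp\in H$ and $\sigma\in A_a^v$.

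The main obstacle is the bookkeeping around the output quantifiers together with the one complementation. The existentials over $\sigma_{\outp_a}$ in the $\collab$ predicate, in the hypothesis of \ref{eq:local-ge-1}, and in $\Hopeful{u,\varphi_\local^a}$ must be turned into projection/cylindrification steps whose alphabets line up correctly (in particular, that the $\outnota$-component of a trace does not interfere with $\varphi_\local^a$, and that after cylindrification the language genuinely ignores $\proj{\sigma}{\outp_a}$); and the non-collaborative case of \ref{eq:local-ge-3} forces a complementation of a GNBA built from $\mc B_a$ and $\mc B_s$, which is the source of the extra exponential in the $2^{2^{\mc O(|\varphi_\local^a|^2 + |\Psi_\shared|^2)}}$ bound on $|\mc C_a|$. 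Everything else --- the finiteness of $D$, $\Vals{\varphi_\local^a}$, $\Vals{\Psi_\shared}$, and the translation of the bounded quantifier alternations over $D$ into the nested unions and intersections above --- is routine.
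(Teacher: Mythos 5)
Your construction is essentially the paper's own: the same basic threshold automata $\mc B_a^{\sim u},\mc B_s^{\sim w}$, existential projection over $\outp_a$ (with cylindrification) to encode collaborativeness, the hypothesis of \ref{eq:local-ge-1}, and hopefulness, one complementation for the non-collaborative case, and a union over the entries of $\alldec$ and the decompositions, combined with $\mc H$, $\mc A_a^v$ and $\overline{\mc G_a^v}$. The only differences are cosmetic --- you use $\geq$-threshold automata and the set $D_\distr$ where the paper enumerates exact value pairs via a ``Bad'' set, and you attach $\mc A_a^v$ and the guarantee violation at slightly different places in the Boolean combination --- so the proposal is correct and matches the paper's proof.
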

\begin{proof}
We construct a GNBA $\mc C_a$ 
that characterizes the language of traces $\sigma_\inp \parallel \sigma_{\outp_a}\parallel \sigma_{\outnota}$ such that  
there exists $(v,\hopef,D)\in \alldec$ such that $\sigma_\inp \in \hopef$, $\sigma \in A_a^v$, 
and some of the conditions \ref{eq:local-ge-1}, \ref{eq:local-ge-1} or \ref{eq:local-ge-3} in  \Cref{def:contract-sat} is violated. 

First,  from the formulas $\varphi_\local^a$ and $\Psi_\shared$,  
we construct the necessary GNBA
$\mc B_a^{\sim u}$ and $\mc B_s^{\sim w}$.

Then, for each $(v,\mc H,D)\in \alldec$ and $\distr \in D$,  from the GNBA
$\mc B_a^{\sim u}$ and $\mc B_s^{\sim w}$,  and the contract automata  $\mc A_a^v,  \overline{\mc G_a^v}$ and $\mc H$,  we construct  the  following GNBA: 
\begin{itemize}  
\item $\mc B_{\distr}' = \mc H \times \proje{\mc B_a^{= \distr[a]} \times \mc B_s^{= \distr[s]}}{\outa} \times \mc A_a^v$,\\
    which captures ``good'' inputs to agent $a$ with respect to $\distr$, those that enable $\distr$ and satisfy the assumption $\mc A_a^v$.

\item $\mc B_{\distr}'' = \big(\bigcup_{(u,w) \in \mathsf{Bad}(D,\distr[\overline a])} \mc B_a^{= u} \times \mc B_s^{= w}\big) \cup \overline{\mc G_a}^v$, \\
where $\mathsf{Bad}(D,\distr[\overline a]) \subseteq  \Vals{\varphi_\local^a} \times \Vals{\Psi_\shared}$ is such that $(u,w) \in\mathsf{Bad}(D,\distr[\overline a])$ if and only if 
for all $\distr' \in D$,  $\distr'[\overline a] \neq \distr[\overline a]$ or $u < \distr'[a]$ or  $w < \distr'[s]$. \\
    This automaton characterizes the ``bad'' outputs of agent $a$ with respect to $\distr$,
    either by not aligning with $\distr$ or violating its guarantee $\mc G_a^v$.
    Here, $\mathsf{Bad}(\decset,\distr[\overline a])$ enumerates the pairs of values that, from the perspective of $a$, do not conform to $\distr$.

\item $\mc B_{D}'''  = \mc H \times \overline{\bigcup_{\distr'\in D} \proje{\mc B_a^{= \distr'[a]} \times \mc B_s^{= \distr'[s]}}{\outa}}$. \\
    We use this automaton to represent outputs of other agents $\nota$ that are not $\decset$-collaborative for agent $a$, so not in $\collab{\decset,\sigma_\inp}_{\nota}$,
    for inputs $\sigma_\inp \in \hopef$.

\item $\mc B'''' = \bigcup_{u \in \Vals{\varphi_a}} (\proje{\mc B_a^{= u}}{\outa} \times \mc B_a^{< u})$, \\
    characterizing the traces that violate the obligation of best-effort satisfying the local specification $\varphi_\local^a$.

\end{itemize}

Intuitively,  the automaton $\bigcup_{\distr \in D} (\mc B_{\distr}' \times \mc B_{\distr}'')$ characterizes the language of the traces that violate condition \ref{eq:local-ge-1} or \ref{eq:local-ge-2} for $(v,H,D)$, 
and the automaton  $\mc B_{D}''' \times \mc B''''$ the language of the traces that violate condition \ref{eq:local-ge-3}.

Finally,   we construct the GNBA 
\[\mc C_a:= \bigcup_{(v,H,D)\in\alldec}\Big(\big(\bigcup_{\distr \in D} (\mc B_{\distr}' \times \mc B_{\distr}'')\big) \cup (\mc B_{D}''' \times \mc B'''') \Big).\]

Note that for a specific agent $a$, the two relevant entries of value decompositions $\distr \in \decset$ in terms of its obligations are $\distr[a]$ and $\distr[s]$.
These correspond to the satisfaction value of $\varphi_\local^a$ and $\Psi_\shared$, and we can enumerate just the unique pairs $\distr[a]$, $\distr[s]$ for $\decset$.
The resulting automaton using those pairs only will be equivalent to $\mc C_a$ as described above, 
as redundant pairs would not introduce any further obligations, and thus not change its language.

For the GNBA $\mc C_a$ constructed above, it holds that for $\sigma \in (\power{\AP})^\omega$  we have 
$\sigma = (\sigma_\inp \parallel \sigma_{\outa} \parallel \sigma_{\outnota})\in \mc L(\mc C_a)$ if and only if there exists  $(v, \hopef, D) \in \alldec$ such that 
$\sigma_\inp \in H$ and 
$\sigma \in A_a^v$, 
and at least one of the following is satisfied.
\begin{enumerate}
\item \label{app:autcon1} $\sigma_{\outnota} \in \collab{v,\sigma_\inp}_{\nota}$ , and  for some $\distr \in D$ 
 there is an output sequence $\sigma_{\outp_a}' \in (\power{\outp_a})^\omega$, with 
\begin{itemize}
\item $\sema{\varphi_\local^a, (\sigma_\inp \parallel \sigma_{\outp_a}'\parallel \sigma_{\outnota})} = \distr[a]$, and
\item $\sema{\Psi_\shared,(\sigma_\inp \parallel \sigma_{\outp_a}' \parallel \sigma_{\outnota})} = \distr[s]$.
\end{itemize}
and for every $\distr' \in D$ it holds that 
\begin{itemize}
\item $\distr'[a'] \neq \distr[a']$ for some $a' \neq a$,  or
\item $\sema{\varphi_\local^a,(\sigma_\inp \parallel \sigma_{\outa} \parallel \sigma_{\outnota})} < \distr'[a]$, or
\item $\sema{\Psi_\shared,(\sigma_\inp \parallel \sigma_{\outa}  \parallel \sigma_{\outnota})} < \distr'[s]$.
\end{itemize}
\item \label{app:autcon2} $\sigma_{\outnota} \in \collab{v,\sigma_I}_{\nota}$, 
$\sigma \not \in G_a^v$.
\item \label{app:autcon3}
 $\sigma_{\outnota} \not\in \collab{v,\sigma_I}_{\nota}$ and for some $u \in \Vals{\varphi_a}$,
there exists $\sigma_{\outp_a}' \in (\power{\outp_a})^\omega$ such that
$\sema{\varphi_\local^a,(\sigma_\inp \parallel \sigma_{\outp_a}'\parallel \sigma_{\outnota})} = u,$
but
$\sema{\varphi_\local^a,(\sigma_\inp \parallel \sigma_{\outa} \parallel \sigma_{\outnota})} < u$.
\end{enumerate}
The conditions are a direct consequence of the construction of the automata $\mc B_{\distr}',\mc B_{\distr}'', \mc B_{D}'''$ and $\mc B''''$.
$\mc C_a$ is the union automaton over all $(v, \hopef, \decset) \in \alldec$,
where either (\ref{app:autcon1}) for a $\distr \in \decset$ the inputs are ``good'' and the outputs ``bad'' (\ref{app:autcon2}),
or (\ref{app:autcon3}) the outputs of other agents $\sigma_{\outnota}$ are not in $\collab{v,\sigma_\inp}_{\nota}$ 
but outputs of $a$ $\outp_a$ are not ``good-enough'' with respect to $\varphi_\local^a$.
\end{proof}

\subsection*{Checking Contract Satisfaction}

\subsubsection*{Complexity Analysis}
The initial GNBA constructed from \ltlf\ formulas have at most $2^{(l^2)}$ states and at most $l$ sets of accepting states, where $l$ is the length of the respective formula~\cite{AlmagorBK16}.
We assume all given automata representations for the contract to be NBA, that is, they have exactly one set of accepting states.

The \emph{size} of an NBA/GNBA $\mc A$,  denoted $|\mc A|$, is equal to the number of states and transitions in the automaton.

For each $(v,\mc H,D)\in \alldec$ we establish the following.
\begin{itemize}
\item The automaton 
$\bigcup_{\distr \in D} (\mc B_{\distr}' \times \mc B_{\distr}'')$ has size at most
$|\Vals{\varphi_\local^a}| \cdot |\Vals{\Psi_\shared}|\cdot |\mc H| \cdot 2^{\mc O(|\varphi_\local^a |^2 + | \Psi_\shared |^2)} \cdot |\mc A_a^v| \cdot | \overline{\mc G_a}^v|$.
 Since the number of possible values is at most exponential in the length of the \ltlf\ formula,  the size is upper-bounded by
$|\mc H| \cdot 2^{\mc O(|\varphi_\local^a |^2 + | \Psi_\shared |^2)} \cdot |\mc A_a^v| \cdot | \overline{\mc G_a}^v|$.
In terms of accepting sets, $\mc B_{\distr}'$ has at most $|\varphi_\local^a| + |\Psi_\shared| + 1$, 
while $\mc B_{\distr}''$ has $|\varphi_\local^a| + |\Psi_\shared|$.
The union automaton $\bigcup_{\distr \in D} (\mc B_{\distr}' \times \mc B_{\distr}'')$ thus has at most $\mc O(|\varphi_\local^a| + |\Psi_\shared|)$ sets of accepting states. 

\item The automaton $\mc B_{D}'''$ requires complementation of the projected automata $\proje{\mc B_a^{= \distr'[a]} \times \mc B_s^{= \distr'[s]}}{\outa}$.
 We convert it into an NBA before complementation, 
 leading to an increase in size by factor $|\varphi_\local^a| + |\Psi_\shared|$~\cite{BaierKatoen08}, so we size have up to 
 $(|\Vals{\varphi_\local^a}| + |\Vals{\Psi_\shared}|) \cdot 2^{\mc O(|\varphi_\local^a |^2 + | \Psi_\shared |^2)}$.
 The complementation then leads to another exponential blow-up, resulting in size at most 
 $|\mc H| \cdot 2^{\mc O((|\Vals{\varphi_\local^a}| \cdot |\Vals{\Psi_\shared}| \cdot 2^{\mc O(|\varphi_\local^a |^2 + | \Psi_\shared |^2)})\cdot \log b)}$, 
 where $b = (|\Vals{\varphi_\local^a}| \cdot |\Vals{\Psi_\shared}| \cdot 2^{\mc O(|\varphi_\local^a |^2 + | \Psi_\shared |^2)})$ subsuming the previous increase.
 Since the number of possible values is at most exponential in the length of the \ltlf\ formula,   we obtain the upper bound 
 $|\mc H| \cdot 2^{2^{\mc O(|\varphi_\local^a |^2 + | \Psi_\shared |^2)}}$.
 Because $\mc H$ is also an NBA, $\mc B_{D}'''$ will only have one set of accepting states.
\end{itemize}

For agent $a$, the automaton $\mc B''''$ has size upper-bounded by
$2^{\mc O(|\varphi_\local^a |^2 + | \Psi_\shared |^2)}$, 
and number of sets of accepting states by $|\varphi_\local^a| + |\Psi_\shared|$.

Thus, the size of the GNBA $\mc C_a$ is at most  
\[ 2^{2^{\mc O(|\varphi_\local^a |^2 + | \Psi_\shared |^2)}}  \cdot \sum_{(v,\mc H, D) \in \alldec} |\mc H| \cdot |\mc A_a^v| \cdot | \overline{\mc G_a}^v| .\]
The number of sets of accepting states of $\mc C_a$ has an upper bound of $\mc O(|\varphi_\local^a| + |\Psi_\shared|)$.
There exists an NBA equivalent to $\mc C_a$  of size 
$\mc O(|\mc C_a| \cdot |\alpha|)$, where $|\alpha|$ is the number of  sets of accepting states in $\mc C_a$~\cite{BaierKatoen08}
Since $|C_a|$ is already doubly exponential in $|\varphi_\local^a |^2 + | \Psi_\shared|^2$,  the size of the NBA is upper-bounded by 
$2^{2^{\mc O(|\varphi_\local^a |^2 + | \Psi_\shared |^2)}}  \cdot \sum_{(v,\mc H, D) \in \alldec} |\mc H| \cdot |\mc A_a^v| \cdot | \overline{\mc G_a}^v| .$

Checking contract satisfaction can be done by checking language-emptiness on the intersection of the GNBA $\mc C_a$ and the respective implementation $M_a$ for each agent. 
This can be done in time linear in the size of the product of $M_a$ and the non-deterministic B\"uchi automaton obtained from $\mc C_a$~\cite{BaierKatoen08,VardiW94}.
Thus overall we have 
\[\sum_{a \in \Agents} |M_a| \cdot 2^{2^{\mc O(|\varphi_\local^a |^2 + | \Psi_\shared |^2)}}  \cdot \sum_{(v,\mc H, D) \\ \in \alldec} |\mc H| \cdot |\mc A_a^v| \cdot | \overline{\mc G_a}^v|.\]

\setcounter{theorem}{2}
\begin{theorem}[Complexity of Checking Decomposition Contract Satisfaction]
    Checking if a multi-agent system $\mc S = (\inp,  \outp,\langle M_1,\ldots,M_n\rangle)$ 
    satisfies a decomposition contract $(\alldec,\agv)$ for a compositional \ltlf\ specification $\Phi$
    can be done in time linear in 
    \[\sum_{a \in \Agents} |M_a| \cdot 2^{2^{\mc O(|\varphi_\local^a |^2 + | \Psi_\shared |^2)}}  \cdot \sum_{(v,\mc H, D) \\ \in \alldec} |\mc H| \cdot |\mc A_a^v| \cdot | \overline{\mc G_a}^v|.\]
\end{theorem}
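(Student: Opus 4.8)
The plan is to reduce checking \gedc\ satisfaction to a family of automata nonemptiness tests, one per agent, and then to bound the cost of building the relevant automata and of the nonemptiness test. Fix an agent $a \in \Agents$ and let $\mc C_a$ be the GNBA from \Cref{prop:gnba-correctness-appx}, whose language is exactly the set of traces $\sigma = \sigma_\inp \parallel \sigma_\outp$ such that some $(v,\mc H,D) \in \alldec$ satisfies $\sigma_\inp \in H$ and $\sigma \in A_a^v$ while at least one of conditions \ref{eq:local-ge-1}--\ref{eq:local-ge-3} of \Cref{def:contract-sat} is violated for $a$. The observation driving the reduction is that, as $\sigma_\inp$ ranges over $(\power{\inp})^\omega$ and $\sigma_{\outnota}$ over $(\power{\outnota})^\omega$, the trace $\sigma_\inp \parallel M_a(\sigma_\inp \parallel \sigma_{\outnota}) \parallel \sigma_{\outnota}$ ranges exactly over $\mc L(M_a)$; hence $M_a$ fulfils its obligation from \Cref{def:contract-sat} if and only if $\mc L(M_a) \cap \mc L(\mc C_a) = \emptyset$, and $\mc S \models (\alldec,\agv)$ if and only if this holds for every agent.

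It then remains to carry out the size analysis of $\mc C_a$ and the emptiness check. For the size of $\mc C_a$ I would invoke the construction recalled in the proof of \Cref{prop:gnba-correctness-appx} together with the bounds of \cite{AlmagorBK16}: the component automata $\mc B_a^{\sim u}$ and $\mc B_s^{\sim w}$ obtained from $\varphi_\local^a$ and $\Psi_\shared$ have at most $2^{\mc O(|\varphi_\local^a|^2 + |\Psi_\shared|^2)}$ states, products of a bounded number of them stay singly exponential, and the only super-exponential step is the complementation needed to form $\mc B_D'''$ (the automaton describing behaviors of $\overline a$ that are not $D$-collaborative), which first degeneralizes a projected product GNBA to an NBA and then complements it, incurring a doubly-exponential factor $2^{2^{\mc O(|\varphi_\local^a|^2 + |\Psi_\shared|^2)}}$. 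Since $|\Vals{\varphi_\local^a}|$ and $|\Vals{\Psi_\shared}|$ are themselves at most exponential in the respective formula sizes, and since for agent $a$ only the pairs $(\distr[a],\distr[s])$ of a decomposition $\distr$ are relevant — so the enumeration over $\distr \in D$ is bounded by $|\Vals{\varphi_\local^a}| \cdot |\Vals{\Psi_\shared}|$ — all these factors are absorbed into the outer double exponential. Taking the union over the entries of $\alldec$ contributes the factor $\sum_{(v,\mc H,D) \in \alldec} |\mc H| \cdot |\mc A_a^v| \cdot |\overline{\mc G_a}^v|$, and since $\mc C_a$ has only $\mc O(|\varphi_\local^a| + |\Psi_\shared|)$ accepting sets, degeneralizing it to an NBA multiplies the state count by this many, again absorbed. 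Thus the equivalent NBA has size at most $2^{2^{\mc O(|\varphi_\local^a|^2 + |\Psi_\shared|^2)}} \cdot \sum_{(v,\mc H,D) \in \alldec} |\mc H| \cdot |\mc A_a^v| \cdot |\overline{\mc G_a}^v|$.

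Finally, the Moore machine $M_a$ is recognized, viewed as a deterministic Büchi automaton with all states accepting, by an automaton of size $\mc O(|M_a|)$, so the intersection of $\mc L(M_a)$ with the NBA for $\mc C_a$ is accepted by a Büchi automaton whose size is the product of the two, and nonemptiness of a Büchi automaton is decidable in linear time \cite{BaierKatoen08,VardiW94} (and, as noted in the main text, can be done on the fly and value by value). Multiplying through for a fixed $a$ gives the per-agent bound $|M_a| \cdot 2^{2^{\mc O(|\varphi_\local^a|^2 + |\Psi_\shared|^2)}} \cdot \sum_{(v,\mc H,D) \in \alldec} |\mc H| \cdot |\mc A_a^v| \cdot |\overline{\mc G_a}^v|$, and summing over the $n$ agents yields the stated total. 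The only step beyond routine bookkeeping is the complementation analysis: one must verify that degeneralization-then-complementation really costs a single extra exponential, and that every subsidiary factor (the cardinalities $|\Vals{\cdot}|$, the enumeration over $D$, and the number of accepting sets of the generalized automata) is genuinely subsumed by that outer double exponential.
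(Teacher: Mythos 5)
Your proposal is correct and follows essentially the same route as the paper's own analysis: reduce satisfaction to per-agent emptiness of $\mc L(M_a) \cap \mc L(\mc C_a)$, bound $\mc C_a$ by noting that the only doubly-exponential step is the degeneralize-then-complement construction of the non-collaborative-behavior automaton while the value enumerations and accepting-set counts are absorbed, and conclude with linear-time B\"uchi nonemptiness on the product with $M_a$. Nothing essential is missing.
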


\subsection*{Checking Validity of Given Contract}

We now describe a procedure for checking whether a given pair $(\alldec,\agv)$ is a good-enough decomposition contract w.r.t. a compositional \ltlf\ specification $\Phi$.
This includes checking that $\alldec$ satisfies the conditions of  \Cref{def:a-decomp}, and that for each $v\in \Vals{\Phi}$,  $\agv(v)$ satisfies the conditions of \Cref{def:ag-contract}.  
This is done as follows: 
\begin{itemize}
\item Checking the conditions of \Cref{def:a-decomp} for $\alldec$:
\begin{itemize}
	\item[\ref{cond:adec-inp}] \textit{(Input coverage)}  
Checking that every $v$-hopeful input sequence is covered by some set $\hopef$ for value $v' \geq v$	can be done by verifying for every $v \in \Vals{\Phi}$ that
$\mc L(\proje{\mc B_\Phi^{= v}}{\outp}) \subseteq  \mc{L}(\bigcup_{(v',\mc H, \decset)  \in \alldec:v' \geq v} \mc H)$,
where the GNBA $\mc B_\Phi^{= v}$ is constructed from  $\Phi$.
	\item[\ref{cond:adec-sat}] \textit{(Satisfiable decompositions)} can be verified by checking that $\mc L(\bigtimes_{a=1}^n \mc B_a^{= \distr[a]} \times \mc B_s^{= \distr[s]}) \neq \emptyset$ for every value decomposition $\distr$ appearing in $\alldec$.
\end{itemize}
\item Checking the conditions of \Cref{def:ag-contract} for $\agv$:

The language inclusion condition can be checked using the respective automata,  or ensured by constructing the guarantees based on the assumptions.  
The other conditions are ensured by requiring that the assumptions and guarantees are represented by safety automata that syntactically satisfy \ref{cond:ag1}  and \ref{cond:ag2} from \Cref{def:ag-contract}.

\item Checking the conditions of \Cref{def:decomp-contract} of \gedc:

The conditions are satisfied if for every $v\in \Vals{\Phi}$, 
the language $\mc L(\proje{\mc B_{\Phi}^{=v}}{\outp} )$ 
is a subset of 
$\mc L(\bigcup_{(v',H,D) \in \alldec:v' \geq v} \bigcup_{\distr \in D} (\mc H \times \proje{\mc E_{v',\distr}}{\outp}))$,
 where the automaton $\mc E_{v',\distr}$ is defined by\\
$
\mc E_{v',\distr}  : =  (\bigtimes_{a=1}^n \mc B_a^{= \distr[a]}) \times \mc B_s^{= \distr[s]} \times (\bigtimes_{a=1}^n \mc A_a^{v'}). 
$ 
\end{itemize}

\subsubsection*{Complexity Analysis}

\begin{itemize}
\item Checking input coverage is performed by checking language inclusion for each $v \in \Vals{\Phi}$,  that is, by checking emptiness of the language of the product of $\proje{\mc B_\Phi^{= v}}{\outp}$ and 
$\bigtimes_{(v',\mc H, \decset)  \in \alldec:v' \geq v} \overline{\mc H}$.
Since $|\Vals{\Phi}|$ is at most exponential in $|\Phi|$,
the overall time is linear in $2^{\mc O(|\Phi|^2)} \cdot \Pi_{{(v',\mc H, \decset)  \in \alldec:v' \geq v}} 2^{|\mc H|}$.
\item Checking the satisfiability of the value decompositions in $\alldec$ is performed by checking language emptiness for each $\distr \in \alldec$.
The respective automaton has a size of at most 
$2^{\mc O(\sum_{a=1}^n |\varphi_\local^a |^2 + | \Psi_\shared |^2)}$, 
 or  $2^{\mc O(|\Phi|^2)}$.
Thus, the overall time  is linear in $n_{dec} \times 2^{\mc O(|\Phi|^2)}$, where $n_{dec}$ is the number of value decompositions in $\alldec$.
\item Checking the requirements on $\agv$ is performed by checking language inclusion for each $v \in \Vals{\Phi}$ and $a \in \Agents$. 
Each inclusion check can be done in time exponential in 
$|\mc A_{a}^v|\cdot \Pi_{a' \neq a} |\overline{\mc G_{a'}^v}|$~\cite{SistlaVW85}.
Since $\Vals{\Phi} \leq  2^{|\Phi|}$, the total time is bounded by 
$2^{|\Phi|} \cdot |\Agents| \cdot 2^{|\mc A_{a}^v|\cdot \Pi_{a' \neq a} |\overline{\mc G_{a'}^v}|}$.
\item Checking the conditions of \Cref{def:decomp-contract} is performed by checking language inclusion for each $v \in \Vals{\Phi}$.
The GNBA $\mc B_{\Phi}^{=v}$ has at most $2^{|\Phi|^2}$ states and $|\Phi|$ sets of accepting states.
We construct the complement of the automaton 
$\bigcup_{(v',H,D) \in \alldec:v' \geq v} \bigcup_{\distr \in D} (\mc H \times \proje{\mc E_{v',\distr}}{\outp})$ by complementing $\mc H$ and  $\proje{\mc E_{v',\distr}}{\outp}$,  
resulting in a complement automaton of size at most $\Pi_{(v',H,D) \in \alldec:v' \geq v} (2^{\mc O(|\mc H | \cdot \log |H|)} +   2^{\mc O((|D| \cdot 2^{|\Phi|^2} \cdot \Pi_{a \in \Agents} |\mc A_a^{v'}|) \cdot \log (|D| \cdot 2^{|\Phi|^2} \cdot \Pi_{a \in \Agents} |\mc A_a^{v'}|))})$.
\end{itemize}

\subsection*{Modular Verification}

\begin{example}\label{ex:stronger-local-spec}
    We consider a case where the requirements are tightened in the design process.
    Take the setting from \Cref{ex:intro}, with a contract of $a_1$ and $a_2$ forcing alternation every two steps in good weather when  \call\ is always true.
    The local specification $\widehat\varphi_\local^1$ was modified and $a_1$ must now regularly satisfy $\atbase_1$ for $3$ turns.
    Similar to the case where we loosened requirements, this can affect the obligations in a way opposite to the traditional case, because of the optimality we enforce.
    Here, strengthening the requirements makes achieving the previous value impossible for some environment inputs, which excuses the system from stronger obligations for those inputs.

    However, an existing implementation will fail since it violates $\widehat\varphi_\local^1$ where it previously did not.
    The new subformula of $\widehat\varphi_\local^1$ reads 
    \begin{align*}
        \LTLglobally \LTLfinally (\atbase_1 \land \LTLnext \atbase_1 \land \LTLnext \LTLnext \atbase_1).
    \end{align*}
    Because of the mismatch between the contract and $\widehat\varphi_\local^1$, the previous implementation no longer works.
    This is because $a_1$ is expected to both achieve satisfaction value $1$ for $\varphi_\local^1$ and respond to calls after two turns.
    Thus,  the contract is no longer suitable  and requires a revision.
\end{example}

\subsubsection{Procedure for Maximizing Local Satisfaction}

Consider a compositional  specification $\Phi$,   
a \gedc\ $(\alldec,\agv)$ for $\Phi$, 
and a MAS $\mc S = (\inp,  \outp,\langle M_1,\ldots,M_n\rangle)$ such that $\mc S \models (\alldec,\agv)$.
Suppose that the local specification for agent $a$ is modified from ${\varphi_\local^a}$ to  $\widehat\varphi_\local^a$, resulting in a revised global specification $\widehat\Phi$.
Let $\widehat M_a$ be a new implementation of agent $a$,  and 
$\widehat{\mc S} = (\inp,  \outp,\langle M_1,\ldots\widehat M_a,\ldots,M_n\rangle)$ the resulting MAS.
In order to check whether $\widehat{\mc S} \models (\alldec,\agv)$,  it suffices to verify only the conditions in \Cref{def:contract-sat} as pertaining to agent $a$,  
with respect to the new implementation $\widehat M_a$ of agent $a$ and the original local specification $\varphi_\local^a$.  
If this is the case,  by \Cref{thm:decomp-soundness},  we can conclude that the modified system $\widehat{\mc S}$ still satisfies the original specification $\Phi$. 
Otherwise,  the new implementation $\widehat M_a$ of agent $a$ is incompatible with the given contract.\looseness=-1

To further strengthen obligations on agent $a$ beyond the existing contract,
we additionally check if the new implementation maximizes the value of $\widehat\varphi_\local^a$ while satisfying the existing contract. 
That is,  whether $\widehat M_a$ satisfies conditions 
\ref{eq:local-ge-1},  \ref{eq:local-ge-2}  and \ref{eq:local-ge-3}, and in addition the property below holds:

For all $u \in \Vals{\widehat\varphi_\local^a}$,  if $\sigma_{\inp} \in \Hopeful{u,\widehat\varphi_\local^a}$,  then it holds that
 $\sema{\widehat\varphi_\local^a,(\sigma_{\inp} \parallel M_a(\sigma_{\inp} \parallel \sigma_{{\outnota}})\parallel \sigma_{{\outnota}})} \geq u$.

The precise property of a new implementation $\widehat M_a$ must then be as follows:
For every input sequence $ \sigma_{\inp} $, output of other components ${\sigma_{{\outnota}}}$, the following conditions are met:
\begin{itemize}
\item If there exists an output $\sigma_{\outa}$ where 
\[\sema{\Phi, ( \sigma_{\inp} \parallel \sigma_{{\outnota}} \parallel \sigma_{\outa})} = v,\]
then $\sema{\Phi, (\sigma_{\inp} \parallel \sigma_{\outnota} \parallel \widehat M_a (\sigma_{\inp} \parallel \sigma_{\outnota}))} \geq v$.
\item It here exists an output $\sigma_{\outa}$ where 
\[\sema{ {\widehat\varphi_\local^a} \cap G_a^v, ( \sigma_{\inp} \parallel \sigma_{\outnota} \parallel \sigma_{\outa})} = w,\]
then $\sema{{\widehat\varphi_\local^a}, (\sigma_{\inp} \parallel \sigma_{\outnota} \parallel \widehat M_a (\sigma_{\inp} \parallel \sigma_{\outnota}))} \geq w$.
\end{itemize}

This ensures that satisfaction of the modified local specification $\widehat\varphi_\local^a$ is maximized while the given contract  is satisfied.
The extended condition will fail if the new implementation does not satisfy the modified specification or if it violates the contract.

\newpage
\section{Benchmarks}
We implemented our method in a prototype,  which takes a compositional $\ltlf$ specification, an implementation model $\mc S$ and can optionally take a contract $(\alldec, \agv)$ as input.
The tool will check if the given implementation for each agent satisfies the conditions of contract satisfaction as defined in this paper.
It uses \textit{{S}pot}~\cite{duret.22.cav} (v2.10.6) for the automata operations.
To our knowledge, there are currently no other tools available that would apply to our setting or could easily be extended to compositional \ltlf\ model checking with good-enough satisfaction.

We compare the compositional approach with a monolithic method using our prototype, demonstrating the benefits of compositional methods for the same underlying implementation.
We performed experiments on several examples on a laptop with an Intel Core i7 processor at 2.8 GHz and 16 GB of memory.
We present the results of the evaluation in the main paper.
Below are the examples used in the experimental evaluation, including the contracts used in each case.

\subsubsection{delivery\_vehicles}
The intro example from the paper, a 3-agent-system that has the environment send \call\ and \badw\ via its input atomic proposition.
If there is a \call , agents need to respond by moving to $\atsite_a$ for the satisfaction of $\Psi_\shared$.
If there is \badw\ when $a_1$ or $a_2$ are traveling, their satisfaction value locally will be reduced.

\begin{itemize}
	\item $\Phi = \frac{1}{6} \varphi_1 \oplus \frac{1}{6} \varphi_2 \oplus \frac{1}{6} \varphi_3 \oplus \frac{1}{2} \psi_\shared$
	\item $I = \{ \call, \badw \}$
	\item $O_1 = \{ \atbase_1, \atsite_1 \}$ 
	\item $O_2 = \{ \atbase_2, \atsite_2 \}$
	\item $O_3 = \{ \atsite_3 \}$
\end{itemize}

For $i = 1,2$:
\begin{align*}
	\varphi_\local^i = &\LTLglobally (\lnot \atbase_i \lor \lnot \atsite_i) \land \\
	&\LTLglobally (\atbase_i \to \LTLnext \lnot \atsite_i) \land \\
	&\LTLglobally \LTLfinally (\atbase_i \land \LTLnext \atbase_i) \land \\
	&\LTLglobally (\badw \to \LTLnext ( (\atbase_i \lor \atsite_i) \oplus_{\frac{1}{2}} \top )). 
\end{align*}

\begin{align*}
\varphi_\local^3 &= \LTLglobally \LTLfinally \lnot \atsite_3
\end{align*}

\begin{align*}
	\Psi_\shared = \LTLglobally \call \to ((\lnot \atsite_1 \land \LTLnext (\atsite_1 \land \atsite_3)) \\
	\lor (\lnot \atsite_2 \land \LTLnext (\atsite_2 \land \atsite_3)) \\
	\lor \LTLnext (\lnot \atsite_1 \land \LTLnext (\atsite_1 \land \atsite_3)) \\
	\lor \LTLnext (\lnot \atsite_2 \land \LTLnext (\atsite_2 \land \atsite_3))).
\end{align*}

\textbf{Decomposition Contract}\\
$\alldec: \{(1,H_1,\{(1,1,1,1)\}),\\
(\frac{11}{12},H_\frac{11}{12},\{(0.5,1,1,1),(1,0.5,1,1)\}),\\
(\frac{5}{6},H_\frac{5}{6},\{(0,1,1,1),(1,0,1,1)\})\}$

Contract for agent 1:
\begin{itemize}
	\item $A_1^1 = \LTLglobally ( ( \atsite_1 \land (\LTLnext \atbase_1 ) \\
	\to \LTLnext( \call \to \LTLnext \LTLnext(\atsite_2))) \\ 
	\land ((\call \land \LTLnext \call) \to \LTLnext \LTLnext \atsite_3) \\
	\land ((\lnot \call \land \LTLnext \call) \to \LTLnext \LTLnext \LTLnext \atsite_3)) $

	\item $G_1^1 = \LTLglobally ( \atsite_2 \land (\LTLnext \atbase_2 ) \\
	\to \LTLnext( \call \to \LTLnext \LTLnext(\atsite_1)))$

	\item $A_1^{\frac{11}{12}} =  \LTLglobally ( ( ( \atsite_1 \land (\LTLnext \atbase_1 ) \\
	\to \LTLnext( \call \to \LTLnext \LTLnext(\atsite_2))) \\ 
	\land ((\call \land \LTLnext \call) \to \LTLnext \LTLnext \atsite_3) \\
	\land ((\lnot \call \land \LTLnext \call) \to \LTLnext \LTLnext \LTLnext \atsite_3) \\
	\land ((\lnot \atbase_2 \land \lnot \atsite_2 \land \badw) \\
	\to	\LTLglobally ( \call \land \badw \to \LTLnext \LTLnext(\atsite_2))))) $

	\item $G_1^{\frac{11}{12}} =  \LTLglobally ( \atsite_2 \land (\LTLnext \atbase_2 ) \\
	\to \LTLnext( \call \to \LTLnext \LTLnext(\atsite_1))\\
	\land ((\lnot \atbase_1 \land \lnot \atsite_1 \land \badw) \\
	\to	\LTLglobally ( \call \land \badw \to \LTLnext \LTLnext(\atsite_1))))$

	\item $A_1^{\frac{5}{6}} = \LTLglobally ( ( ( \atsite_1 \land (\LTLnext \atbase_1 ) \\
	\to \LTLnext( \call \to \LTLnext \LTLnext(\atsite_2))) \\ 
	\land ((\call \land \LTLnext \call) \to \LTLnext \LTLnext \atsite_3) \\
	\land ((\lnot \call \land \LTLnext \call) \to \LTLnext \LTLnext \LTLnext \atsite_3) \\
	\land ((\lnot \atbase_2 \land \lnot \atsite_2 \land \badw) \\
	\to	\LTLglobally ( \call \land \badw \to \LTLnext \LTLnext(\atsite_2)))))$

	\item $G_1^{\frac{5}{6}} = \LTLglobally ( \atsite_2 \land (\LTLnext \atbase_2 ) \\
	\to \LTLnext( \call \to \LTLnext \LTLnext(\atsite_1))\\
	\land ((\lnot \atbase_1 \land \lnot \atsite_1 \land \badw) \\
	\to	\LTLglobally ( \call \land \badw \to \LTLnext \LTLnext(\atsite_1))))$
\end{itemize}

Contract for agent 2:
\begin{itemize}
	\item $A_2^1 = \LTLglobally ( ( \atsite_2 \land (\LTLnext \atbase_2 ) \\
	\to \LTLnext( \call \to \LTLnext \LTLnext(\atsite_1))) \\ 
	\land ((\call \land \LTLnext \call) \to \LTLnext \LTLnext \atsite_3) \\
	\land ((\lnot \call \land \LTLnext \call) \to \LTLnext \LTLnext \LTLnext \atsite_3)) $

	\item $G_2^1 = \LTLglobally ( \atsite_1 \land (\LTLnext \atbase_1 ) \\
	\to \LTLnext( \call \to \LTLnext \LTLnext(\atsite_2)))$

	\item $A_2^{\frac{11}{12}} =  \LTLglobally ( ( ( \atsite_2 \land (\LTLnext \atbase_2 ) \\
	\to \LTLnext( \call \to \LTLnext \LTLnext(\atsite_1))) \\ 
	\land ((\call \land \LTLnext \call) \to \LTLnext \LTLnext \atsite_3) \\
	\land ((\lnot \call \land \LTLnext \call) \to \LTLnext \LTLnext \LTLnext \atsite_3) \\
	\land ((\lnot \atbase_1 \land \lnot \atsite_1 \land \badw) \\
	\to	\LTLglobally ( \call \land \badw \to \LTLnext \LTLnext(\atsite_1))))) $

	\item $G_2^{\frac{11}{12}} =  \LTLglobally ( \atsite_1 \land (\LTLnext \atbase_1 ) \\
	\to \LTLnext( \call \to \LTLnext \LTLnext(\atsite_2))\\
	\land ((\lnot \atbase_2 \land \lnot \atsite_2 \land \badw) \\
	\to	\LTLglobally ( \call \land \badw \to \LTLnext \LTLnext(\atsite_2))))$

	\item $A_2^{\frac{5}{6}} = \LTLglobally ( ( ( \atsite_2 \land (\LTLnext \atbase_2 ) \\
	\to \LTLnext( \call \to \LTLnext \LTLnext(\atsite_1))) \\ 
	\land ((\call \land \LTLnext \call) \to \LTLnext \LTLnext \atsite_3) \\
	\land ((\lnot \call \land \LTLnext \call) \to \LTLnext \LTLnext \LTLnext \atsite_3) \\
	\land ((\lnot \atbase_1 \land \lnot \atsite_1 \land \badw) \\
	\to	\LTLglobally ( \call \land \badw \to \LTLnext \LTLnext(\atsite_1)))))$
	
	\item $G_2^{\frac{5}{6}} = \LTLglobally ( \atsite_1 \land (\LTLnext \atbase_1 ) \\
	\to \LTLnext( \call \to \LTLnext \LTLnext(\atsite_2))\\
	\land ((\lnot \atbase_2 \land \lnot \atsite_2 \land \badw) \\
	\to	\LTLglobally ( \call \land \badw \to \LTLnext \LTLnext(\atsite_2))))$
\end{itemize}

Contract for agent 3:
\begin{itemize}
	\item $A_3^1 = \LTLglobally((\call \to \LTLnext \LTLnext(\atsite_1 \lor \atsite_2))) $
	
	\item $G_3^1 = \LTLglobally(((\call \land \LTLnext \call) \to \LTLnext \LTLnext \atsite_3) \\
	\land ((\lnot \call \land \LTLnext \call) \to \LTLnext \LTLnext \LTLnext \atsite_3))$

	\item $A_3^{\frac{11}{12}} =  \LTLglobally((\call \to \LTLnext \LTLnext(\atsite_1 \lor \atsite_2)))$, 
	
	\item $G_3^{\frac{11}{12}} =  \LTLglobally(((\call \land \LTLnext \call) \to \LTLnext \LTLnext \atsite_3) \\
	\land ((\lnot \call \land \LTLnext \call) \to \LTLnext \LTLnext \LTLnext \atsite_3))$

	\item $A_3^{\frac{5}{6}} = \LTLglobally((\call \to \LTLnext \LTLnext(\atsite_1 \lor \atsite_2)))$
	
	\item $G_3^{\frac{5}{6}} = \LTLglobally(((\call \land \LTLnext \call) \to \LTLnext \LTLnext \atsite_3) \\
	\land ((\lnot \call \land \LTLnext \call) \to \LTLnext \LTLnext \LTLnext \atsite_3))$
\end{itemize}
Modified specification:
\begin{align*}
	{\widehat{\varphi}_\local^1} &= \LTLglobally (\lnot \atbase_i \lor \lnot \atsite_i) \\
	&\land \LTLglobally (\atbase_i \to \LTLnext \lnot \atsite_i) \\
	&\land \LTLglobally \LTLfinally (\atbase_i \land \LTLnext \atbase_i). 
\end{align*}

\subsubsection{tasks\_collab}
\newcommand{\task}{\mathsf{task}}
\newcommand{\joint}{\mathsf{joint}}
\newcommand{\work}{\mathsf{work}}

A system with 4 agents. 
The environment uses the input atomic proposition $\task$ to force at least two agents into $\joint$ working for the next step.
The best local value for each agent $c$ is $\frac{1}{2}$, which is achieved by either doing $\work$ or being idle ($\lnot \work_c \land \lnot \joint_c$).
A decomposition contract is needed to determine which agents sacrifice their local satisfaction to deal with incoming $\task$.
In this case, agents 1 and 2 guarantee to handle $\Psi_\shared$, s.t. agents 3 and 4 can each satisfy $\varphi_\local^3$ and $\varphi_\local^4$ with the maximum value of $\frac{1}{2}$.
Importantly, if the environment is such that no $\task$ is issued, then all agents can focus on their local requirements.

\begin{itemize}
	\item $\Phi = \frac{1}{6} \varphi_\local^1 \oplus \frac{1}{6} \varphi_\local^2 \oplus \frac{1}{6} \varphi_\local^3 \oplus \frac{1}{6} \varphi_\local^4 \frac{1}{3} \oplus \Psi_\shared$
	\item $I = \{ \task \}$
	\item $O_c = \{ \work_c, \joint_c \}$
	\item $c \in \{ 1,2,3,4 \}$
\end{itemize}
\begin{align*}
\varphi_\local^c &= \LTLglobally ( (\LTLnext \work_c ) \oplus_{\frac{1}{2}} (\LTLnext (\lnot \work_c \land \lnot \joint_c)) )\\
 &\land \LTLglobally (\lnot \work_c \lor \lnot \joint_c) \\
\Psi_\shared &= \LTLglobally ( \task \to \LTLnext (\bigvee_{i \neq j} \joint_i \land \joint_j))
\end{align*}
\textbf{Decomposition Contract}\\
$\alldec$: $\{(\frac{2}{3},H_\frac{2}{3},\{(\frac{1}{2},\frac{1}{2},\frac{1}{2},\frac{1}{2},1)\}),(\frac{1}{2}, H_\frac{1}{2}, \{(0,0,\frac{1}{2},\frac{1}{2},1)\})\}$

Contract for agent 1:
\begin{itemize}
	\item $A_c^{\frac{2}{3}} =  \texttt{true}$, $G_c^{\frac{2}{3}} =  \texttt{true}$
	\item $A_1^{\frac{1}{2}} = \LTLglobally (\task \to \LTLnext(\joint_2))$
	\item $G_1^{\frac{1}{2}} = \LTLglobally(\task \to \LTLnext(\joint_1))$
\end{itemize}

Contract for agent 2:
\begin{itemize}
	\item $A_c^{\frac{2}{3}} =  \texttt{true}$, $G_c^{\frac{2}{3}} =  \texttt{true}$
	\item $A_2^{\frac{1}{2}} = \LTLglobally (\task \to \LTLnext(\joint_1))$
	\item $G_2^{\frac{1}{2}} = \LTLglobally(\task \to \LTLnext(\joint_2))$
\end{itemize}

Contract for agent 3:
\begin{itemize}
	\item $A_c^{\frac{2}{3}} =  \texttt{true}$, $G_c^{\frac{2}{3}} =  \texttt{true}$
	\item $A_3^{\frac{1}{2}} = \LTLglobally (\task \to \LTLnext(\joint_1 \land \joint_2))$
	\item $G_3^{\frac{1}{2}} = \texttt{true}$
\end{itemize}

Contract for agent 4:
\begin{itemize}
	\item $A_c^{\frac{2}{3}} =  \texttt{true}$, $G_c^{\frac{2}{3}} =  \texttt{true}$
	\item $A_4^{\frac{1}{2}} = \LTLglobally (\task \to \LTLnext(\joint_1 \land \joint_2))$
	\item $G_4^{\frac{1}{2}} = \texttt{true}$
\end{itemize}
Modified specification:
\begin{align*}
	{\widehat{\varphi}_\local^1} &= \LTLglobally ( (\LTLnext \work_1 ) \oplus_{\frac{1}{2}} (\LTLnext (\lnot \work_1)) )\\
	&\land \LTLglobally (\lnot \work_1 \lor \lnot \joint_1) \\
\end{align*}

\subsubsection{tasks\_scaled}

Direct extension to previous example with 6 agents. 
The environment uses the input atomic proposition $\task$ to force at least two agents into $\joint$ working for the next step.
The best local value for each agent is still $\frac{1}{2}$, which is achieved by either doing $\work$ or being idle ($\lnot \work_c \land \lnot \joint_c$).
A decomposition contract is needed to determine which agents devalue their local satisfaction to deal with incoming $\task$.
Here, agents 1 and 3 guarantee to handle $\Psi_\shared$, s.t. agents 2,4,5,6 can each satisfy $\varphi_\local^a$ with the maximum value of $\frac{1}{2}$.

\begin{itemize}
	\item $\Phi = \frac{1}{8} \varphi_\local^1 \oplus \frac{1}{8} \varphi_\local^2 
	\oplus \frac{1}{8} \varphi_\local^3 \oplus \frac{1}{8} \varphi_\local^4 
	\oplus \frac{1}{8} \varphi_\local^5 \oplus \frac{1}{8} \varphi_\local^6 \frac{1}{4} \oplus \Psi_\shared$
	\item $I = \{ \task \}$
	\item $O_c = \{ \work_c, \joint_c \}$
	\item $c \in \{ 1,2,3,4 \}$
\end{itemize}
\begin{align*}
\varphi_\local^c &= \LTLglobally ( (\LTLnext \work_c ) \oplus_{\frac{1}{2}} (\LTLnext (\lnot \work_c \land \lnot \joint_c)) )\\
 &\land \LTLglobally (\lnot \work_c \lor \lnot \joint_c) \\
\Psi_\shared &= \LTLglobally ( \task \to \LTLnext (\\
&\bigvee_{(i,j) \in \{ (1,3),(1,5),(3,5),(2,4),(2,6),(4,6)\}} \joint_i \land \joint_j))
\end{align*}
\textbf{Decomposition Contract}\\
$\alldec : \{(\frac{5}{8},H_\frac{5}{8},\{(\frac{1}{2},\frac{1}{2},\frac{1}{2},\frac{1}{2},\frac{1}{2},\frac{1}{2},1)\}),\\
(\frac{1}{2}, H_\frac{1}{2}, \{(0,\frac{1}{2},0,\frac{1}{2},\frac{1}{2},\frac{1}{2},1)\})\}$

Contract for agent 1:
\begin{itemize}
	\item $A_c^{\frac{5}{8}} =  \texttt{true}$, $G_c^{\frac{5}{8}} =  \texttt{true}$
	\item $A_1^{\frac{1}{2}} = \LTLglobally (\task \to \LTLnext(\joint_3))$
	\item $G_1^{\frac{1}{2}} = \LTLglobally(\task \to \LTLnext(\joint_1))$
\end{itemize}

Contract for agent 2:
\begin{itemize}
	\item $A_c^{\frac{5}{8}} =  \texttt{true}$, $G_c^{\frac{5}{8}} =  \texttt{true}$
	\item $A_3^{\frac{1}{2}} = \LTLglobally (\task \to \LTLnext(\joint_1 \land \joint_3))$
	\item $G_3^{\frac{1}{2}} = \texttt{true}$
\end{itemize}

Contract for agent 3:
\begin{itemize}
	\item $A_c^{\frac{5}{8}} =  \texttt{true}$, $G_c^{\frac{5}{8}} =  \texttt{true}$
	\item $A_2^{\frac{1}{2}} = \LTLglobally (\task \to \LTLnext(\joint_1))$
	\item $G_2^{\frac{1}{2}} = \LTLglobally(\task \to \LTLnext(\joint_3))$
\end{itemize}

Contract for agent 4:
\begin{itemize}
	\item $A_c^{\frac{5}{8}} =  \texttt{true}$, $G_c^{\frac{5}{8}} =  \texttt{true}$
	\item $A_4^{\frac{1}{2}} = \LTLglobally (\task \to \LTLnext(\joint_1 \land \joint_3))$
	\item $G_4^{\frac{1}{2}} = \texttt{true}$
\end{itemize}

Contract for agent 5:
\begin{itemize}
	\item $A_c^{\frac{5}{8}} =  \texttt{true}$, $G_c^{\frac{5}{8}} =  \texttt{true}$
	\item $A_5^{\frac{1}{2}} = \LTLglobally (\task \to \LTLnext(\joint_1 \land \joint_3))$
	\item $G_5^{\frac{1}{2}} = \texttt{true}$
\end{itemize}

Contract for agent 6:
\begin{itemize}
	\item $A_c^{\frac{5}{8}} =  \texttt{true}$, $G_c^{\frac{5}{8}} =  \texttt{true}$
	\item $A_6^{\frac{1}{2}} = \LTLglobally (\task \to \LTLnext(\joint_1 \land \joint_3))$
	\item $G_6^{\frac{1}{2}} = \texttt{true}$
\end{itemize}
Modified specification:
\begin{align*}
	{\widehat{\varphi}_\local^1} &= \LTLglobally ( (\LTLnext \work_1 ) \oplus_{\frac{1}{2}} (\LTLnext (\lnot \work_1)) )\\
	&\land \LTLglobally (\lnot \work_1 \lor \lnot \joint_1) \\
\end{align*}

\subsubsection{robots\_help\_a3}
\newcommand{\req}{\mathsf{req}}
\newcommand{\complete}{\mathsf{complete}}
\newcommand{\help}{\mathsf{help}}
\newcommand{\proceed}{\mathsf{proceed}}
System with 3 agents.
Here, the first agent has a higher weight to its local specification $\varphi_\local^1$. 
Thus the optimal strategy sees agents 2 and 3 sacrificing their satisfaction value to help agent 1 if the environment does not allow for full satisfaction.

\begin{itemize}
	\item $\Phi = (\varphi_\local^1 \oplus_{\frac{3}{5}} ( \varphi_\local^2 \oplus_{\frac{1}{2}} \varphi_\local^3 ) ) \oplus_{\frac{1}{2}} \Psi_\shared$
	\item $I = \{ \req \}$
	\item $O_1 = \{ \complete \}$
	\item $O_2 = \{ \help_2, \proceed_2 \}$
	\item $O_3 = \{ \help_3, \proceed_3 \}$
\end{itemize}
\begin{align*}
\varphi_\local^1 &= \LTLglobally ( \req \to \LTLnext( \lnot \complete ) ) \\
\varphi_\local^2 &= (\LTLglobally \LTLfinally \proceed_2) \land \LTLglobally (\lnot \help_2 \lor \lnot \proceed_2) \\
\varphi_\local^3 &= (\LTLglobally \LTLfinally \proceed_3) \land \LTLglobally (\lnot \help_3 \lor \lnot \proceed_3) \\
\Psi_\shared &= \LTLglobally (\complete \to (\help_2 \land \help_3)) \land \\
&\LTLglobally ((\lnot \complete) \to \LTLnext (\lnot \proceed_2 \land \lnot \proceed_3))
\end{align*}
\textbf{Decomposition Contract}\\
$\alldec$: $\{(1,H_1,\{(1,1,1,1)\}),(\frac{4}{5}, H_\frac{4}{5}, \{(1,0,0,1)\})\}$

Contract for agent 1:
\begin{itemize}
\item $A_1^1 = \LTLglobally (((\lnot \req \land \lnot \complete) \to \LTLnext(\help_2 \land \help_3)) \land (\lnot \complete \to \LTLnext (\lnot \proceed_2 \land \lnot \proceed_3))) $
\item $G_1^1 = (\lnot \complete) \land \LTLglobally((\lnot \req \land \lnot \complete) \to \LTLnext(\complete \land \LTLnext(\lnot \complete)))$
		
\item $A_1^{\frac{4}{5}} = \LTLglobally(((\lnot \req \land \lnot \complete) \to \LTLnext(\help_2 \land \help_3)) \land (\lnot \complete \to \LTLnext (\lnot \proceed_2 \land \lnot \proceed_3)))$
\item $G_1^{\frac{4}{5}} = (\lnot \complete) \land \LTLglobally((\lnot \req \land \lnot \complete) \to \LTLnext(\complete \land \LTLnext(\lnot \complete)))$
\end{itemize}

Contract for agent 2:
\begin{itemize}
\item $A_2^1 = \LTLglobally(((\lnot \req \land \lnot \complete) \to \LTLnext(\complete \land \help_3)) \land (\lnot \complete \to \LTLnext (\lnot \proceed_3)) \land ((\complete \lor \req) \to \LTLnext (\lnot \complete)))$
\item $G_2^1 = \LTLglobally(((\lnot \req \land \lnot \complete) \to \LTLnext(\help_2 )) \land (\lnot \complete \to \LTLnext (\lnot \proceed_2)))$

\item $A_2^{\frac{4}{5}} = \LTLglobally(((\lnot \req \land \lnot \complete) \to \LTLnext(\complete \land \help_3)) \land (\lnot  \complete \to \LTLnext (\lnot \proceed_3)) \land ((\complete \lor \req) \to \LTLnext (\lnot \complete)))$
\item $G_2^{\frac{4}{5}} = \LTLglobally(((\lnot \req \land \lnot \complete) \to \LTLnext(\help_2 )) \land (\lnot \complete \to \LTLnext (\lnot \proceed_2)))$
\end{itemize}

Contract for agent 3:
\begin{itemize}
\item $A_3^1 = \LTLglobally(((\lnot \req \land \lnot \complete) \to \LTLnext(\complete \land \help_2)) \land (\lnot \complete \to \LTLnext (\lnot \proceed_2)) \land ((\complete \lor \req) \to \LTLnext (\lnot \complete)))$
\item $G_3^1 = \LTLglobally(((\lnot \req \land \lnot \complete) \to \LTLnext(\help_3 )) \land (\lnot \complete \to \LTLnext (\lnot \proceed_3)))$

\item $A_3^{\frac{4}{5}} = \LTLglobally(((\lnot \req \land \lnot \complete) \to \LTLnext(\complete \land \help_2)) \land (\lnot \complete \to \LTLnext (\lnot \proceed_2)) \land ((\complete \lor \req) \to \LTLnext (\lnot \complete)))$
\item$G_3^{\frac{4}{5}} = \LTLglobally(((\lnot \req \land \lnot \complete) \to \LTLnext( \help_3)) \land (\lnot \complete \to \LTLnext (\lnot \proceed_3)))$
\end{itemize}
Modified specification:
\[
	{\widehat{\varphi}_\local^2} = (\LTLglobally \LTLfinally \proceed_2) \\
\]

\subsubsection{robots\_help\_a5}
System with 5 agents, scaled up from previous example.
Here, the first agent again has a higher weight to its local specification $\varphi_\local^1$. 
Thus the optimal strategy sees agents 2 and 3 sacrificing their satisfaction value to help agent 1 if the environment does not allow for full satisfaction.
Agents 4 and 5 are excused by the contract to focus on their own goals.

\begin{itemize}
	\item $\Phi = ({\frac{5}{15}} \varphi_\local^1 \oplus {\frac{1}{15}} \varphi_\local^2 \oplus {\frac{1}{15}} \varphi_\local^3  \oplus {\frac{1}{15}} \varphi_\local^4 \oplus {\frac{1}{15}} \varphi_\local^5 \oplus {\frac{6}{15}} \Psi_\shared$
	\item $I = \{ \req \}$
	\item $O_1 = \{ \complete \}$
	\item $O_2 = \{ \help_2, \proceed_2 \}$
	\item $O_3 = \{ \help_3, \proceed_3 \}$
	\item $O_4 = \{ \help_4, \proceed_4 \}$
	\item $O_5 = \{ \help_5, \proceed_5 \}$
\end{itemize}
\begin{align*}
\varphi_\local^1 &= \LTLglobally ( \req \to \LTLnext( \lnot \complete ) ) \\
\varphi_\local^2 &= (\LTLglobally \LTLfinally \proceed_2) \land \LTLglobally (\lnot \help_2 \lor \lnot \proceed_2) \\
\varphi_\local^3 &= (\LTLglobally \LTLfinally \proceed_3) \land \LTLglobally (\lnot \help_3 \lor \lnot \proceed_3) \\
\varphi_\local^4 &= (\LTLglobally \LTLfinally \proceed_4) \land \LTLglobally (\lnot \help_4 \lor \lnot \proceed_4) \\
\varphi_\local^5 &= (\LTLglobally \LTLfinally \proceed_5) \land \LTLglobally (\lnot \help_5 \lor \lnot \proceed_5) \\
\Psi_\shared &= \LTLglobally (\complete \to (\\
&\; \help_2 \land \help_3) \lor (\help_4 \land \help_5)) \land \\
&\LTLglobally ((\lnot \complete) \to \LTLnext (\lnot \proceed_2 \land \lnot \proceed_3 \\
&\; \land \lnot \proceed_4 \land \lnot \proceed_5))
\end{align*}
\textbf{Decomposition Contract}\\
$\alldec$: $\{(1,H_1,\{(1,1,1,1,1,1)\}),\\
(\frac{3}{4}, H_\frac{4}{5}, \{(1,0,0,1,1,1)\})\}$

Contract for agent 1:
\begin{itemize}
\item $A_1^1 = \LTLglobally (((\lnot \req \land \lnot \complete) \to \LTLnext(\help_2 \land \help_3)) \land (\lnot \complete \to \LTLnext (\lnot \proceed_2 \land \lnot \proceed_3))) $
\item $G_1^1 = (\lnot \complete) \land \LTLglobally((\lnot \req \land \lnot \complete) \to \LTLnext(\complete \land \LTLnext(\lnot \complete)))$
		
\item $A_1^{\frac{4}{5}} = \LTLglobally(((\lnot \req \land \lnot \complete) \to \LTLnext(\help_2 \land \help_3)) \land (\lnot \complete \to \LTLnext (\lnot \proceed_2 \land \lnot \proceed_3)))$
\item $G_1^{\frac{4}{5}} = (\lnot \complete) \land \LTLglobally((\lnot \req \land \lnot \complete) \to \LTLnext(\complete \land \LTLnext(\lnot \complete)))$
\end{itemize}

Contract for agent 2:
\begin{itemize}
\item $A_2^1 = \LTLglobally(((\lnot \req \land \lnot \complete) \to \LTLnext(\complete \land \help_3)) \land (\lnot \complete \to \LTLnext (\lnot \proceed_3)) \land ((\complete \lor \req) \to \LTLnext (\lnot \complete)))$
\item $G_2^1 = \LTLglobally(((\lnot \req \land \lnot \complete) \to \LTLnext(\help_2 )) \land (\lnot \complete \to \LTLnext (\lnot \proceed_2)))$

\item $A_2^{\frac{4}{5}} = \LTLglobally(((\lnot \req \land \lnot \complete) \to \LTLnext(\complete \land \help_3)) \land (\lnot  \complete \to \LTLnext (\lnot \proceed_3)) \land ((\complete \lor \req) \to \LTLnext (\lnot \complete)))$
\item $G_2^{\frac{4}{5}} = \LTLglobally(((\lnot \req \land \lnot \complete) \to \LTLnext(\help_2 )) \land (\lnot \complete \to \LTLnext (\lnot \proceed_2)))$
\end{itemize}

Contract for agent 3:
\begin{itemize}
\item $A_3^1 = \LTLglobally(((\lnot \req \land \lnot \complete) \to \LTLnext(\complete \land \help_2)) \land (\lnot \complete \to \LTLnext (\lnot \proceed_2)) \land ((\complete \lor \req) \to \LTLnext (\lnot \complete)))$
\item $G_3^1 = \LTLglobally(((\lnot \req \land \lnot \complete) \to \LTLnext(\help_3 )) \land (\lnot \complete \to \LTLnext (\lnot \proceed_3)))$

\item $A_3^{\frac{4}{5}} = \LTLglobally(((\lnot \req \land \lnot \complete) \to \LTLnext(\complete \land \help_2)) \land (\lnot \complete \to \LTLnext (\lnot \proceed_2)) \land ((\complete \lor \req) \to \LTLnext (\lnot \complete)))$
\item$G_3^{\frac{4}{5}} = \LTLglobally(((\lnot \req \land \lnot \complete) \to \LTLnext( \help_3)) \land (\lnot \complete \to \LTLnext (\lnot \proceed_3)))$
\end{itemize}

Contract for agent 4:
\begin{itemize}
\item $A_3^1 = \LTLglobally(((\lnot \req \land \lnot \complete) \to \LTLnext(\complete \land \help_2)) \land (\lnot \complete \to \LTLnext (\lnot \proceed_2)) \land ((\complete \lor \req) \to \LTLnext (\lnot \complete)))$
\item $G_3^1 = \LTLglobally(((\lnot \req \land \lnot \complete) \to \LTLnext(\help_3 )) \land (\lnot \complete \to \LTLnext (\lnot \proceed_3)))$

\item $A_3^{\frac{4}{5}} = \LTLglobally(((\lnot \req \land \lnot \complete) \to \LTLnext(\complete \land \help_2)) \land (\lnot \complete \to \LTLnext (\lnot \proceed_2)) \land ((\complete \lor \req) \to \LTLnext (\lnot \complete)))$
\item$G_3^{\frac{4}{5}} = \LTLglobally(((\lnot \req \land \lnot \complete) \to \LTLnext( \help_3)) \land (\lnot \complete \to \LTLnext (\lnot \proceed_3)))$
\end{itemize}

Contract for agent 5:
\begin{itemize}
\item $A_3^1 = \LTLglobally(((\lnot \req \land \lnot \complete) \to \LTLnext(\complete \land \help_2)) \land (\lnot \complete \to \LTLnext (\lnot \proceed_2)) \land ((\complete \lor \req) \to \LTLnext (\lnot \complete)))$
\item $G_3^1 = \LTLglobally(((\lnot \req \land \lnot \complete) \to \LTLnext(\help_3 )) \land (\lnot \complete \to \LTLnext (\lnot \proceed_3)))$

\item $A_3^{\frac{4}{5}} = \LTLglobally(((\lnot \req \land \lnot \complete) \to \LTLnext(\complete \land \help_2)) \land (\lnot \complete \to \LTLnext (\lnot \proceed_2)) \land ((\complete \lor \req) \to \LTLnext (\lnot \complete)))$
\item$G_3^{\frac{4}{5}} = \LTLglobally(((\lnot \req \land \lnot \complete) \to \LTLnext( \help_3)) \land (\lnot \complete \to \LTLnext (\lnot \proceed_3)))$
\end{itemize}

Modified specification:
\[
	{\widehat{\varphi}_\local^2} = (\LTLglobally \LTLfinally \proceed_2) \\
\]

\subsubsection{synchr\_response} 
\newcommand{\respond}{\mathsf{respond}}
A 3-agent-system that needs to respond to environment input atomic proposition $\call$ by $\respond$ at the same time.
Each agent needs to $\work$, where different lengths of consecutive work change the satisfaction of $\varphi_\local$.
If no $\call$ comes in, the best value overall is $1$, and each agent can either $\work$, or $\respond$.
Note that each agent is free to skip every other work, as the overall value is the minimum over all subformulas.

\begin{itemize}
	\item $\Phi = \varphi_1 \land \varphi_2 \land \varphi_3 \land \Psi_\shared$
	\item $I = \{ \call \}$
	\item $O_c = \{ \work_c, \respond_c \}$
	\item $c \in \{ 1, 2, 3 \}$
\end{itemize}

\begin{align*}
\varphi_\local^c &= \LTLglobally ( \work_c \oplus_{\frac{1}{3}} (\lnot \work_c \to \LTLnext \work_c ) \\
&\; \oplus_{\frac{1}{2}} (\texttt{true}))\\
&\; \land (\lnot \work_c \lor \lnot \respond_c)) \\
\Psi_\shared &= \LTLglobally (\call \land \LTLnext (\bigwedge_c \respond_c) \oplus_{\frac{1}{2}} \texttt{true} )
\end{align*}

\textbf{Decomposition Contract}\\
$\alldec$: $\{(\frac{1}{3},H_\frac{1}{3},\{(\frac{1}{3},\frac{1}{3},\frac{1}{3},1)\}),\\
(\frac{1}{2}, H_\frac{1}{2}, \{(\frac{2}{3},\frac{2}{3},\frac{2}{3},\frac{1}{2}),(1,\frac{2}{3},\frac{2}{3},\frac{1}{2}),\\
(\frac{2}{3},1,\frac{2}{3},\frac{1}{2}),\dots\})\}$

Contract for agent 1:
\begin{itemize}
	\item $A_1^{\frac{2}{3}} =  \texttt{true}$, $G_1^{\frac{2}{3}} =  \texttt{true}$
	\item $A_1^{\frac{1}{2}} = \LTLglobally (\call \to \LTLnext(\respond_2 \land \respond_3))$
	\item $G_1^{\frac{1}{2}} = \LTLglobally(\call \to \LTLnext(\respond_1))$
\end{itemize}

Contract for agent 2:
\begin{itemize}
	\item $A_2^{\frac{2}{3}} =  \texttt{true}$, $G_2^{\frac{2}{3}} =  \texttt{true}$
	\item $A_2^{\frac{1}{2}} = \LTLglobally (\call \to \LTLnext(\respond_1 \land \respond_3))$
	\item $G_2^{\frac{1}{2}} = \LTLglobally(\call \to \LTLnext(\respond_2))$
\end{itemize}

Contract for agent 3:
\begin{itemize}
	\item $A_3^{\frac{2}{3}} =  \texttt{true}$, $G_3^{\frac{2}{3}} =  \texttt{true}$
	\item $A_3^{\frac{1}{2}} = \LTLglobally (\call \to \LTLnext(\respond_1 \land \respond_2))$
	\item $G_3^{\frac{1}{2}} = \LTLglobally(\call \to \LTLnext(\respond_3))$
\end{itemize}

Modified specification:
\begin{align*}
	{\widehat{\varphi}_\local^1} &= \LTLglobally ( \work_1 \oplus_{\frac{1}{3}} (\LTLnext \work_1 ) \\
	 &\oplus_{\frac{1}{2}} (\texttt{true}))\\
	 &\land (\lnot \work_1 \lor \lnot \respond_1) \\
\end{align*}

\end{document}